\documentclass[pra,aps,twocolumn,superscriptaddress,longbibliography,nofootinbib]{revtex4-2}

\usepackage{amsmath,amsthm,amssymb,amsfonts,setspace}
\usepackage{bm}
\usepackage{bbm}
\usepackage{float}
\usepackage{epsfig,color}
\usepackage{tikz}
\usepackage{bbold}
\usepackage{xfrac}
\usetikzlibrary{positioning,arrows.meta,decorations.pathreplacing}

\usepackage[dvipsnames]{xcolor}
\usepackage[colorlinks]{hyperref}
\hypersetup{
    allcolors  = {teal!75!blue},
}
\usepackage{mathtools}
\usepackage{cleveref} 
\usepackage{fontenc}
\usepackage{soul}

\usepackage[normalem]{ulem} 
\usepackage{graphicx}

\crefformat{equation}{Eq.~(#2#1#3)} 
\crefformat{section}{Section~#2#1#3}
\crefformat{theorem}{Theorem~#2#1#3}
\crefformat{lemma}{Lemma~#2#1#3}
\crefformat{appendix}{Appendix~#2#1#3}
\Crefformat{equation}{Equation~(#2#1#3)}
\crefformat{figure}{Fig.~#2#1#3}
\crefformat{corollary}{Corollary~#2#1#3}
\crefformat{proposition}{Proposition~#2#1#3}
\Crefformat{proposition}{Proposition~#2#1#3}
\crefformat{exa}{Example~#2#1#3}
\Crefformat{exa}{Example~#2#1#3}
\crefformat{definition}{Definition~#2#1#3}
\Crefformat{definition}{Definition~#2#1#3}
\crefrangeformat{equation}{Eqs.~#3(#1)#4--#5(#2)#6}

\usepackage{etoolbox}

\theoremstyle{plain}
\newtheorem{theorem}{Theorem}[section]
\newtheorem{corollary}[theorem]{Corollary}

\theoremstyle{definition}
\newtheorem{definition}[theorem]{Definition}
\AtEndEnvironment{definition}{\hfill$\diamond$}
\newtheorem{exa}[theorem]{Example}

\newcommand{\bra}[1]{\langle{#1}\vert} 
\newcommand{\ket}[1]{\vert{#1}\rangle} 

 \newcommand{\ketbra}[2]{\left\vert{#1}\right\rangle\!\left\langle{#2}\right\vert}
 \newcommand{\braket}[2]{\left\langle{#1}\middle\vert  {#2} \right\rangle}

\renewcommand{\L}{\left(} 
\newcommand{\R}{\right)} 

\renewcommand{\r}{\mathbf{r}} 

\begin{document}

\title{Coherence as a resource for $N$-box and quantum pigeonhole paradoxes}

\newcommand{\inl}{INL -- International Iberian Nanotechnology Laboratory, Av. Mestre Jos\'{e} Veiga s/n, 4715-330 Braga, Portugal}
\newcommand{\inlshort}{INL -- International Iberian Nanotechnology Laboratory, Braga, Portugal}
\newcommand{\uff}{Instituto de F\'{i}sica, Universidade Federal Fluminense, Av. Gal. Milton Tavares de Souza s/n, Niter\'{o}i -- RJ, 24210-340, Brazil}
\newcommand{\uffshort}{Instituto de F\'{i}sica, Universidade Federal Fluminense, Niter\'{o}i -- RJ, Brazil}
\newcommand{\cfum}{Centro de F\'{i}sica, Universidade do Minho, Campus de Gualtar, 4710-057 Braga, Portugal}
\newcommand{\cfumshort}{Centro de F\'{i}sica, Universidade do Minho, Braga, Portugal}
\newcommand{\ulm}{Institute of Theoretical Physics, Ulm University, Albert-Einstein-Allee 11, 89081 Ulm, Germany} 
\newcommand{\iqst}{Center for Integrated Quantum Science and Technology (IQST), 89081 Ulm, Germany} 

\author{Som Kanjilal}
\email{somkanjilal@gmail.com}
\affiliation{\inlshort}

\author{Rafael Wagner}
\email{rafael.wagner@uni-ulm.de}
\affiliation{\ulm}
\affiliation{\iqst}

\author{Ernesto F. Galv\~ao}
\email{ernestogalvao@id.uff.br}
\affiliation{\inlshort}
\affiliation{\uffshort}

\date{\today}

\begin{abstract}
    Pre- and post-selection (PPS) paradoxes are striking demonstrations of quantum nonclassicality. \emph{Logical} PPS paradoxes, where inferences made with the Aharonov--Bergmann--Lebowitz (ABL) rule are exactly 0 or 1, are linked to contextuality. Non-logical paradoxes lack this strong signature. In this work, we analyse more general, non-logical PPS scenarios involving mixed pre- and post-selected states. We show that two such scenarios, the $N$-box and quantum pigeonhole paradoxes, require coherence of both pre- and post-selected states in the basis of the intermediate measurement. This is done by showing each paradox holds if and only if there is weak-value anomaly for a single, paradox-specific operator, together with the fact that weak-value anomaly requires coherence. This clarifies the role of different notions of nonclassicality in these scenarios, highlighting the required quantitative departures from (strictly) classical explanations provided by incoherent sub-theories of quantum theory.
\end{abstract}

\maketitle

\section{Introduction}\label{sec: intro}

Pre- and post-selection (PPS) paradoxes~\cite{aharonov1990properties,MARONEY2,aharonov2002time,LeiferSpekkensontology,kastner2003nature,aharonov2005quantum} are counterintuitive predictions of quantum theory. These are physical scenarios where a system is prepared, subjected to either strong or weak (often counter-factually reasoned) intermediate measurements, and then post-selected on a specific final outcome. The seemingly paradoxical features arise when modeling such scenarios using quantum theory \emph{but} attempting to provide a classical account on the observed statistics that assumes observables of the system to have simultaneously well-defined, pre-existing properties. One of the most notable examples is provided by the paradigmatic three-box paradox~\cite{albert1985curious,bub1986curious,albert1986comment}, where a \emph{single} particle that is suitably prepared and later post-selected appears to be found with certainty in each of \emph{two} different boxes, when checked one box at a time. 

There is a rich landscape of PPS paradoxes beyond the three-box example just mentioned. To start, one can generalize the scenario of the three-box paradox to an arbitrary number $N+1$ of boxes, where a single particle can be found (when checked between the pre- and post-selection stages) in any one of the $N$ tested boxes~\cite{aharonov1991complete,leavens2006general}. Other examples are the quantum Cheshire cat~\cite{aharonov2013cheshire,aharonov2021dynamical,hance2023contextuality,hance2024dynamicalcheshire,denkmayr2014observation,duprey2018quantum}, Hardy's paradox~\cite{hardy1992quantum,abramsky2012logical,aharonov2016weakvalues,aharonov2002revisiting,santos2021conditions}, and the quantum pigeonhole principle~\cite{aharonov2016quantum}. In each of these cases, the paradox forces us to revisit classical intuitions about the properties of physical systems: The quantum Cheshire cat is supposed to reveal a counterintuitive separation of properties from their carriers; Hardy's paradox demonstrates the failure of Bell's notion of local causality without any inequality violation; and the quantum pigeonhole principle shows how pre- and post-selection can invalidate classical counting reasoning.

The class of \emph{logical} PPS paradoxes~\cite{LeiferspekkensPPScontextuality}, which account for the case when the probabilities for the intermediate measurement conditioned on \emph{both} the pre- and post-selections via the
Aharonov--Bergmann--Lebowitz (ABL)~\cite{aharonov1964time} rule are either 0 or 1, have a sharp nonclassical status: logical PPS paradoxes are proofs of quantum contextuality~\cite{LeiferspekkensPPScontextuality,PuseyLeifer2015,LeiferSpekkensontology,tollaksen2007pre,hance2023contextuality,waegell2017contextuality,waegell2017confined}, a stringent criterion for nonclassicality in quantum theory~\cite{budroni2022kochenspecker}. Put simply, contextuality can be viewed as the impossibility of describing a given experiment's outcome statistics using a classical probabilistic model where the result of any measurement is independent of the broader set of compatible measurements being concurrently 
performed. 

If we set contextuality aside and instead examine the weaker nonclassicality condition that the quantum states relevant for a logical PPS scenario be \emph{coherent}~\cite{streltsov2017colloquium,baumgratz2014quantifying}, this aspect is also relatively well understood. For instance, Pusey and Leifer~\cite{PuseyLeifer2015} demonstrated that logical PPS paradoxes require the pre-selected state $\vert \psi_I \rangle$ and the post-selected state $\vert \psi_F \rangle$ to be non-orthogonal. This is a well-known signature of basis-independent coherence of the pair of states~\cite{designolle2021set,galvao2020quantum}.  Furthermore, Aharonov and Cohen~\cite{aharonov2016weakvalues}  linked these paradoxes to nonclassical \emph{weak values}~\cite{aharonov1988result,tamir2013introduction,dressel2014colloquium}, which are   connected to both coherence~\cite{WG23,wagner2024quantumcircuits,hofmann2009resolutionquantumparadoxesweak,hofmann2011ontherole,hofman2015quantum} and contextuality~\cite{pusey2014anomalous,kunjwal2019anomalous}. More recently, focusing on the quantum Cheshire cat, Hance \emph{et al.}~\cite{hance2023contextuality} arrived at a similar conclusion, demonstrating that the negativity of a set of constructed weak values signals the necessity of quantum coherence to the paradox.

In contrast to the logical case, the operational formalization and nonclassicality analysis of \emph{non-logical} PPS paradoxes have remained largely unexplored. To the best of our knowledge, the only known instance is the ``gamified'' three-box paradox considered by  Maroney~\cite{MARONEY2}. Maroney demonstrated that the nonclassicality of this paradox is tied to scenarios satisfying an operational non-disturbance condition~\cite{Maroney1,MARONEY2}. This condition stipulates that the final post-selection cannot provide information about the outputs of an intermediate measurement that was performed, and it is formally connected to Leggett and Garg's notions of macrorealism via the measurement non-invasiveness condition~\cite{leggett1985quantum,emary2013leggett,vitagliano2023leggett,schmid2024reviewreformulation,hermens2018constraints}. Through this connection, Maroney formally related the signatures of this non-logical three-box paradox to the violation of a Leggett--Garg inequality~\cite{emary2013leggett}.
 
In this work, we extend Maroney's non-logical three-box PPS scenario to allow for an arbitrary number of boxes and mixed pre- and post-selected states, and we show that the same generalization applies to the quantum pigeonhole PPS scenario. We show that, in each generalized scenario, both the pre- and post-selected states must be coherent with respect to the intermediate measurement basis, and we identify the anomaly of a single weak value as the operational, quantifiable witness of this coherence. A paradox arises precisely when this weak value is anomalous. Equivalently, the paradox arises when the Bargmann invariants~\cite{bargmann1964note}---multivariate traces of states, invariant under changes of basis, whose values witness quantum coherence~\cite{oszmaniec2024measuring,fernandes2024unitary,wagner2024quantumcircuits}---have negative real part. 

We further discuss how these results are consistent with  Maroney's identification of the three-box paradox with a Leggett--Garg violation~\cite{MARONEY2}. We reinterpret this result through Schmid's~\cite{schmid2024reviewreformulation} recent reformulation of Leggett and Garg's notion of  macrorealism as a theory-independent strict form of classicality. Schmid's proposal is related to the form of coherence considered here when specialized to quantum theory. 

Our results solidify our understanding of the role of coherence in PPS paradoxes, and provide an instructive path toward understanding the role of other stringent notions of classicality such as generalized contextuality~\cite{spekkens2005contextuality} in such paradoxes beyond the logical case.

\emph{Outline.---}The paper is organized as follows. In \cref{sec:scenario} we review the operational $N$-box PPS scenario, formalize it in Def.~\ref{defNbox}, and isolate the two operational constraints needed for our results (\cref{fig:scenario}). \Cref{sec:coherence} introduces (set) coherence and anomalous weak values, and situates our results relative to the known logical-case connection between non-orthogonality,  coherence, and contextuality~\cite{PuseyLeifer2015,aharonov2016weakvalues}. In \cref{sec:results} we prove that a coherence-free subtheory cannot yield an $N$-box paradox (\cref{thm:nonlogical}), and that nonclassicality is governed by negativity in the weak value of the excluded box (\cref{coro:anomalous}). \Cref{sec:LG} revisits Maroney's~\cite{MARONEY2} identification of the three-box paradox with a Leggett--Garg violation, linking our results to Schmid's~\cite{schmid2024reviewreformulation} recent reformulation of macrorealism as a theory-independent strict form of classicality. In Sec.~\ref{sec:pigeon} we carry out the same programme for the PPS scenario associated to the quantum pigeonhole principle: we formalize the scenario and its non-logical extension in Def.~\ref{defPigeon} and isolate its own minimal operational conditions that are expected to be satisfied for a paradoxical reasoning following Maroney's~\cite{MARONEY2} analysis, before showing, in the following section, that a coherence-free subtheory likewise cannot yield a pigeonhole paradox (\cref{thm:pigeon-nonlogical}), and that nonclassicality is governed by the anomalous weak value of a single operator (\cref{coro:pigeon-anomalous}). \Cref{sec:conclusions} concludes with a discussion of our results and open directions.

\emph{Notation.---}For an arbitrary quantum system described by a Hilbert space $\mathcal{H}$, the state-space of possible quantum states is denoted $\mathcal{D}(\mathcal{H})$. All Hilbert spaces are assumed to be finite, and chosen as $\mathcal{H} = \mathbbm{C}^d$ for some $d$. A quantum observable $A$ is a Hermitian operator. Positive operator-valued measures are denoted $\mathcal{M}=\{E_i\}_i$ and if the elements are orthogonal projectors we use $\mathcal{M} = \{\Pi_i\}_i$ instead. Mixed states are denoted as $\rho$ and pure states are denoted as $\psi = \vert \psi \rangle \langle \psi \vert$. The weak value of an observable $A$ relative to pure pre- and post-selected states $\vert \psi_I \rangle$ and $\vert \psi_F \rangle$ is denoted as $\langle A \rangle_w \equiv \langle \psi_F \vert A \vert \psi_I \rangle/\langle \psi_F\vert \psi_I \rangle$.

\section{The $N$-box PPS paradox}\label{sec:scenario}

We start by discussing the steps of the PPS  
paradox considered in Refs.~\cite{aharonov1990properties,LeiferspekkensPPScontextuality,PuseyLeifer2015,Maroney1,MARONEY2} in its generalized version with $N$ boxes as in Refs.~\cite{aharonov1991complete,leavens2006general}. 

The description of the scenario starts by assuming we prepare an $(N+1)$-level system $\mathcal{H} = \mathbbm{C}^{N+1}$ representing a discretization of possible locations a particle can be in. For that system, we let $B = \{\vert b\rangle\}_{b=1}^{N+1}$ be an arbitrarily chosen orthonormal basis, whose elements are called \emph{boxes}. We thus model whether the particle is in box $b$ by writing that its quantum state is $\psi_b = \vert b \rangle \langle b \vert $. Naturally, measurements of the form $\mathcal{M}_b=\{\Pi_b,\mathbb{1}-\Pi_b\}$ are interpreted---following classical intuition that we will shortly see leads to  paradoxical interpretations---as checking whether the particle is in box $b$. This is so because if the particle is in box $b$, represented by the state $\psi_b = \vert b \rangle \langle b \vert$, a measurement $\mathcal{M}_b$ will return $\Pi_b$ with certainty. If instead the particle is in box $b' \neq b$,  $\psi_{b'} = \vert b'\rangle \langle b' \vert$, we find that a measurement $\mathcal{M}_b$ will return $\mathbb{1}-\Pi_b \equiv \Pi_b^{\perp}$ with certainty.

Operationally, the scenario is summarized in Fig.~\ref{fig:scenario}. We start by preparing the state of our system in an arbitrary quantum state $\rho_I \in \mathcal{D}(\mathbbm{C}^{N+1})$. At an intermediate time, an experimenter performs one of $N$ possible dichotomic sharp non-destructive measurements $\{\mathcal{M}_b\}_{b=1}^N$, each asking the question ``\emph{is the particle in box $b$?}''~\cite{Maroney1}. We exclude the possibility of performing the intermediate measurement $\mathcal{M}_{N+1}$. The above procedure is followed by a post-selection on another arbitrary quantum state $\rho_F \in \mathcal{D}(\mathbbm{C}^{N+1})$.

\begin{figure}[t]
\centering
\begin{tikzpicture}[>=Latex, font=\small,
  stage/.style={rounded corners, draw, thick, align=center,
                minimum height=1.35cm, text width=2.1cm, fill=blue!5}]
\node[stage] (prep) {Pre-selection\\$\rho_I$};
\node[stage, right=0.6cm of prep] (mid) {Intermediate\\measurement\\$\mathcal{M}_b$};
\node[stage, right=0.6cm of mid] (post) {Post-selection\\$\rho_F$};
\draw[->,thick] (prep)--(mid);
\draw[->,thick] (mid)--(post);
\draw[decorate, decoration={brace, amplitude=6pt, raise=4pt}, thick]
   (prep.north west) -- (post.north east)
   node[midway, above=10pt]{Operational non-disturbance~\eqref{eq:op_non_dist}};
\node[below=0.45cm of mid, align=center, text width=8cm, font=\footnotesize]
   {$\mathcal{M}_b=\{\Pi_b,\mathbb{1}-\Pi_b\}$, \ $b=1,\dots,N$ \\ (Box $N+1$ is never measured)};
\end{tikzpicture}
\caption{\textbf{The $N$-box PPS scenario.} A system---interpreted as a particle which can be in $N+1$ distinct boxes---is prepared in quantum state $\rho_I$. An intermediate measurement $\mathcal{M}_b = \{\Pi_b,\mathbb{1}-\Pi_b\}$ with $\Pi_b = \vert b \rangle \langle b \vert$, interpreted as checking whether the particle is found in box $b$, follows. The measurement checking whether the particle is found in box $N+1$ is excluded. Finally, the system is post-selected on $\rho_F$. In all runs, we assume that the post-selection onto $\rho_F$ (successful or not) cannot yield information about which intermediate measurement was performed, a constraint captured by the non-disturbance condition from Eq.~\eqref{eq:op_non_dist}.
\label{fig:scenario}
}
\end{figure}

Let us denote the post-selection measurement $\mathcal{M}_F=\{\rho_F,\mathbb{1}-\rho_F\}$ and write ``${\rm pass}$'' to represent that the post-selection was successful (i.e. the output of $\mathcal{M}_F$ was $\rho_F$) and ``${\rm fail}$'' otherwise. Similarly, we write $b$ to label that the intermediate measurement $\mathcal{M}_b$ resulted in $\Pi_b =  \vert b \rangle \langle b \vert $ and $\neg b$ otherwise. Whether post-selection is successful is, itself, going to depend on the probabilistic predictions of quantum theory that follow from Lüders' state-update rule. The joint probabilities associated to both the intermediate and post-selection measurements are given as:
\begin{align}
\label{eq:Nbox11}
p(b,{\rm pass} \mid \rho_I,\mathcal{M}_b,\mathcal{M}_F) & = \mathrm{Tr}\left[\rho_F\Pi_{b}\rho_I\Pi_{b}\right],\\
\label{eq:Nbox01}
p(\neg b,{\rm pass} \mid \rho_I,\mathcal{M}_b,\mathcal{M}_F) & = \mathrm{Tr}\left[\rho_F\Pi_{b}^\perp\rho_I\Pi_{b}^\perp\right],\\
\label{eq:Nbox10}
p(b,{\rm fail} \mid \rho_I, \mathcal{M}_b,\mathcal{M}_F) & = \mathrm{Tr}\left[(\mathbb{1}-\rho_F)\Pi_{b}\rho_I\Pi_{b}\right],\\
\label{eq:Nbox00}
p(\neg b,{\rm fail} \mid \rho_I, \mathcal{M}_b,\mathcal{M}_F) & = \mathrm{Tr}\left[(\mathbb{1}-\rho_F) \Pi_{b}^\perp\rho_I \Pi_{b}^\perp\right].
\end{align}  
Above, we have denoted $\Pi_{b}^\perp = \mathbb{1}-\Pi_b$. Whenever $\rho_I$, $ \mathcal{M}_b$, and $\mathcal{M}_F$ are clear from the context we simply write $p(i,f) \equiv p(i,f \mid \rho_I,\mathcal{M}_b,\mathcal{M}_F)$ with $i \in \{b,\neg b\}$ and $f \in \{{\rm pass}, {\rm fail}\}$. Moreover, the probability of successful post-selection when no intermediate measurement is performed can be written as 
\begin{equation}
\label{eq:null}
p({\rm pass} \mid \rho_I,\mathcal{M}_F)=\mathrm{Tr}\left[\rho_{F}\rho_{I}\right].
\end{equation}

Inferential reasoning in pre- and post-selection scenarios has intrinsic distinctions relative to typical operational prepare-transform-measure scenarios. One key difference is that we usually make inferences conditioning only on the initial state of our system. In PPS scenarios we instead make inferences conditioning on \emph{both} the initial and final state, provided that a post-selection was successful. This significantly affects counterfactual reasoning. We are interested in making inferences about what would have happened at the intermediate time, had we implemented a measurement $\mathcal{M}_b$, but this is always \emph{conditioned} on the fact that we post-select the state of our system onto $\rho_F$ at the end. 

The correct way to infer the probability that a given output $\Pi_b$ of an intermediate measurement $\mathcal{M}_b$ occurred relative \emph{both} to an initial state $\rho_I$ \emph{and} a successful post-selection onto $\rho_F$ was first described by Aharonov, Bergmann, and Lebowitz in Ref.~\cite{aharonov1964time}, and is given by 
\begin{align}\label{eq:ABL_rule}
p(b \vert \rho_I,\rho_F,\mathcal{M}_b)
=
\frac{p(b,{\rm pass})}
        {p(b,{\rm pass})+p(\neg b,{\rm pass})}.
\end{align}
This is known as the \emph{ABL rule}. Note that above $p(b,{\rm pass}) \equiv p(b,{\rm pass} \mid \rho_I, \mathcal{M}_b,\mathcal{M}_F)$ and similarly for $p(\neg b,{\rm pass})$.

\subsection{Three-box case}\label{subsec:three_box_case}

Reasoning with the ABL rule, together with the classical interpretational baggage we have attached to the intermediate measurements, leads us to paradoxical conclusions. The paradigmatic three-box example takes a three level quantum system pre-selected to be in the state $\rho_I=\ket{\psi_I}\bra{\psi_I}$ where \begin{equation}\label{eq:pre_selected_threeboxclassical}
\ket{\psi_I}=\frac{1}{\sqrt{3}}(\ket{1}+\ket{2}+\ket{3})
\end{equation}
 and post-selected in the state  $\rho_F=\ket{\psi_F}\bra{\psi_F}$ where 
\begin{equation}\label{eq:post_selected_threeboxclassical}
 \ket{\psi_F}=\frac{1}{\sqrt{3}}(\ket{1}+\ket{2}-\ket{3}).
\end{equation}
In this case, we have a non-zero rate of successful post-selection when no measurement occurs, given by 
\begin{equation}\label{eq:3null}
    p({\rm pass} \mid \rho_I,\mathcal{M}_F) = \mathrm{Tr}[\rho_I \rho_F] = \vert \langle \psi_I \vert \psi_F \rangle  \vert^2 =\sfrac{1}{9}.
\end{equation} 
The joint probability that a post-selection was successful \emph{and} that upon implementing the intermediate measurement $\mathcal{M}_b$ with $b \in \{1,2\}$ is 
\begin{align}
\label{eq:3box11}
p(b,{\rm pass} \mid \rho_I, \mathcal{M}_b,\mathcal{M}_F) & = \sfrac{1}{9},\, \text{ and }\\
\label{eq:3box01}
p(\neg b,{\rm pass}\mid \rho_I, \mathcal{M}_b,\mathcal{M}_F) & = 0.
\end{align}

Finally, we can now apply these to the ABL rule in Eq.~\eqref{eq:ABL_rule}. If we assume that there is only one particle that can be in any one of the boxes, the ABL rule predicts that $p(b=1 \vert \rho_I,\rho_F,\mathcal{M}_1) = 1$, indicating that if the final state is $\rho_F$, the particle would have been in box $b=1$ were we to implement the intermediate measurement $\mathcal{M}_1$. However, the same conclusion holds if, instead, we implement $\mathcal{M}_2$ and ask whether the particle was in box $b=2$, since $p(b=2 \vert \rho_I,\rho_F,\mathcal{M}_2) = 1$ as well. This induces us to a puzzling interpretation that a single particle can be found in more than one box with certainty.

Before examining the (metaphysical) assumptions made above we note that we have considered what is called a \emph{logical} (i.e. possibilistic) form of the paradox~\cite{LeiferSpekkensontology}, since the ABL rule equals 1 for both measurements. We could instead relax this assumption and observe that whenever $$p(b=1 \mid \rho_I,\rho_F,\mathcal{M}_1) + p(b=2 \mid \rho_I,\rho_F,\mathcal{M}_2) > 1,$$ we still arrive at a paradoxical conclusion (now probabilistically, rather than possibilistically) because it remains puzzling to reason classically in the same way as before~\cite{Maroney1,MARONEY2}. In particular, we will say that the $N$-box PPS scenario leads to a non-logical (i.e. probabilistic) paradox whenever the ABL statistical predictions violate the inequality
\begin{equation}
\label{eq:probNbox}
\sum_{b=1}^{N}p(b \mid \rho_I,\rho_F,\mathcal{M}_b)\leq 1.
\end{equation}
The logical $N$-box paradox introduced in Ref.~\cite{albert1985curious} arises as an extreme case of violation of Ineq.~\eqref{eq:probNbox}. 

\subsection{Adjudicating the operational assumptions underlying  the $N$-box paradox}

The preceding discussion naturally divides into two distinct aspects. On the one hand, the scenario itself, together with the ABL rule, provides a straightforward \emph{operational} description of the quantum statistics. Although this description is \emph{not} theory-independent (since it relies on the operational predictions of quantum theory) it is nonetheless detached from any particular interpretation. The paradoxical conclusion, on the other hand, arises \emph{only} when we impose a series of strong interpretational assumptions. These assumptions merit significant scrutiny, and many have been investigated in detail in Refs.~\cite{LeiferSpekkensontology,LeiferspekkensPPScontextuality,PuseyLeifer2015,MARONEY2,ravon2007three,kirkpatrick2007reply}. A natural question is whether removing some of these underlying metaphysical assumptions invalidates the paradoxical reasoning. This question has been formally addressed through the introduction of toy models~\cite{kirkpatrick2003classical,LeiferSpekkensontology,MARONEY2,ravon2007three} and no-go results~\cite{LeiferspekkensPPScontextuality,PuseyLeifer2015}, which have advanced our understanding of this puzzle.

Arguably, the most notable output of this careful analysis, specifically from Refs.~\cite{MARONEY2,ravon2007three,LeiferSpekkensontology}, is that for the output ABL statistics to be genuinely paradoxical, in the sense of resisting simple classical toy models, one must impose operational constraints. These constraints are at least twofold: First, we are allowed to implement intermediate measurements to the system sequentially, before a post-selection takes place, to check whether there is actually more than one particle. A key constraint required for  the classicality bound in Ineq.~\eqref{eq:probNbox} to hold is therefore that we never find two particles; the system behaves as if a single particle is always present in one and only one box~\cite{MARONEY2}, described operationally by the constraint
\begin{equation}\label{eq:no_cheating_condition}
    p(b,b' \mid \rho_I, \mathcal{M}_b,\mathcal{M}_{b'}) = 0, \quad \forall b \neq b'.
\end{equation}

Second, one must carefully distinguish between disturbance that is visible in the post-selection statistics and disturbance that is hidden from them---the former is ruled out, while the latter remains possible~\cite{MARONEY2,LeiferspekkensPPScontextuality}. Since this distinction will not be central to what follows, we relegate the technical details to Refs.~\cite{LeiferSpekkensontology,MARONEY2,harrigan2010Einstein}. The key point, however, is that any intermediate measurement could in principle disturb the actual state of the system, provided that this disturbance remains undetectable in the final statistics. This is expressed by the following operational non-disturbance relation: 
\begin{align}
&p(b,{\rm pass} \mid \rho_I,\mathcal{M}_b,\mathcal{M}_F)+p(\neg b,{\rm pass} \mid \rho_I, \mathcal{M}_b,\mathcal{M}_F)  \nonumber \\
&=  p({\rm pass} \mid \rho_I,\mathcal{M}_F), \label{eq:op_non_dist} 
\end{align}
for all $b\in\{1,2,\ldots,N\}$. If the intermediate measurement disturbs the system in a way that later affects the post-selection statistics, then the operational non-disturbance condition is violated---and classical toy models can easily reproduce the apparent paradox~\cite{kirkpatrick2003classical,LeiferSpekkensontology,MARONEY2}.

In earlier formulations~\cite{aharonov1990properties,albert1985curious} the logical $N$-box scenario has been stated without the operational non-disturbance condition~\eqref{eq:op_non_dist}. This realization comes from Ref.~\cite{MARONEY2}.  For future referencing, we organize this discussion in a definition:  
\begin{definition}[$N$-box PPS scenario]\label{defNbox}
    An \emph{$N$-box PPS scenario} is defined by a pre-selected state $\rho_I$, a post-selected state $\rho_F$, and a box basis $\{\Pi_b\}_{b=1}^{N+1}$ (with $\Pi_b = \ket{b}\bra{b}$) for the intermediate measurements. The intermediate measurement $\mathcal{M}_b = \{\Pi_b, \mathbb{1} - \Pi_b\}$ is performed for $b \in \{1,\dots,N\}$, while the measurement $\mathcal{M}_{N+1}$ is excluded. Furthermore, the statistics must satisfy the operational non-disturbing condition~\eqref{eq:op_non_dist}, that $p({\rm pass} \mid \rho_I,\mathcal{M}_F) \neq 0$, and the operational constraint from Eq.~\eqref{eq:no_cheating_condition}. We say that the ABL rule leads to a \emph{paradox} in this scenario whenever Ineq.~\eqref{eq:probNbox} is violated. Additionally, the paradox is said to be  \emph{logical} whenever the ABL rule attains only $0$ or $1$ values.  
\end{definition}

We must note that the requirement $p({\rm pass}\mid\rho_I,\mathcal{M}_F)=\mathrm{Tr}[\rho_F\rho_I]\neq0$ is an extra physical restriction on top of the $N$-box scenario, serving as a precondition for the scenario to be well-posed. Post-selection, by definition, is conditioning on a successful outcome, and a scenario in which that outcome never occurs supplies no statistics to condition on. 

Requiring $\mathrm{Tr}[\rho_I\rho_F] \neq 0$ is also consistent with imposing the operational non-disturbance condition. This is already evident at the level of a single measurement $\mathcal{M}_b$. The ABL rule \eqref{eq:ABL_rule} for an outcome $b$ is the ordinary conditional probability $p(b,{\rm pass})/[p(b,{\rm pass})+p(\neg b,{\rm pass})]$ that follows from Bayes' theorem, which is meaningful only when its denominator is nonzero. The non-disturbance condition~\eqref{eq:op_non_dist} forces this denominator to be equal to $p({\rm pass}\mid\rho_I,\mathcal{M}_F)$. This in turn implies that once Eq.~\eqref{eq:op_non_dist} holds, the ABL rule is undefined for $\mathrm{Tr}[\rho_F\rho_I]=0$. In turn, there is no ABL statistic, paradoxical or otherwise, for Ineq.~\eqref{eq:probNbox} to be evaluated against.

\section{Quantum coherence}\label{sec:coherence}

A state $\rho$ is coherent with respect to a basis $A = \{\vert a \rangle\}_a$ when its density matrix has nonzero off-diagonal elements in that basis, i.e. when $\langle a \vert \rho  \vert a' \rangle \neq 0$ for some pair $a\neq a'$, and incoherent otherwise~\cite{baumgratz2014quantifying,aberg2006quantifyingsuperposition}. Quantum coherence is a well-studied quantifiable resource~\cite{streltsov2017colloquium,wu2021experimental,chitambar2019quantum} underlying applications in quantum computation~\cite{shor1999polynomial,ahnefeld2022role,naseri2022entanglement} and metrology~\cite{ahnefeld2026coherence}, as well as core discussions in quantum foundations~\cite{catani2023whyinterference,catani2023aspects}.

Put differently, coherence of $\rho$ relative to the basis $A=\{\vert a \rangle\}_a$ is the operational signature of the fact that any observable with eigenbasis $A$ fails to commute with $\rho$. This basis-dependent notion can be extended to a basis-independent one, defined as a \emph{collective property} of a set of states $\{\rho_1, \ldots,\rho_n\} \subseteq \mathcal{D}(\mathcal{H})$: the whole set is coherent whenever at least one pair of its members fails to commute. This extension was proposed recently by Designolle \emph{et al.}~\cite{designolle2021set}, who termed it \emph{set coherence}.

That basis-dependent coherence of $\rho_I$ and $\rho_F$ relative to the basis $B = \{\vert b \rangle\}_b$ matters for the paradox is, by construction, immediate: in the paradigmatic example the paradox is manifestly a consequence of these states being coherent in a specific way. What is non-trivial is whether the paradox itself---as a purely statistical, operational prediction---already \emph{certifies} coherence, given only that the underlying theory is quantum theory. Concretely, given the ABL statistics of a scenario satisfying the constraints in Def.~\ref{defNbox} above, does a violation of Ineq.~\eqref{eq:probNbox} force the relevant states and effects to be coherent, with no further assumption about them? This is precisely the type of question addressed by no-go results in quantum foundations, and an affirmative answer would show that no coherence-free subtheory of quantum theory can reproduce the statistical predictions of an $N$-box PPS paradox. This is precisely the content of our Theorem~\ref{thm:nonlogical} below (and also of Theorem~\ref{thm:pigeon-nonlogical} for the pigeonhole PPS scenario).

For the \emph{logical} case Pusey and Leifer~\cite{PuseyLeifer2015} showed that a paradox is possible iff the two states $\rho_I = \vert \psi_I \rangle \langle \psi_I \vert $ and $\rho_F = \vert \psi_F \rangle \langle \psi_F \vert$ are non-orthogonal, i.e.\ their overlap satisfies $0<\vert \langle \psi_I \vert \psi_F \rangle \vert^2 \leq 1$. For pure states this condition---assuming~\footnote{Our Theorem~\ref{thm:nonlogical} and Corollary~\ref{coro:anomalous} will complement this result by showing that whenever the states are parallel, i.e. $\vert \langle \psi_I \vert \psi_F \rangle \vert^2 = 1$, one cannot have a paradox.} both states are non-parallel---is equivalent to the statement that $\{\rho_I,\rho_F\}$ is set coherent~\cite{galvao2020quantum,oszmaniec2024measuring,wagner2024inequalities}, and hence, in particular, coherent relative to the basis $B$. This is also consistent with the fact that these paradoxes are proofs of contextuality, that contextuality requires (some form of) set coherence to be present~\cite{wagner2024coherence,wagner2024inequalities,zhang2026reassessing,catani2023aspects,hofmann2020contextuality,ming2023characterization}---though not conversely~\cite{hardy1999disentangling,catani2023whyinterference,spekkens2007evidence,bartlet2012reconstruction}---and that coherence is therefore a necessary resource for such paradoxes as well. 

Refs.~\cite{aharonov2016weakvalues,waegell2023negativemass,hofmann2009resolutionquantumparadoxesweak,hance2023contextuality,correa2021apparent} have also shown that typical logical PPS paradoxes are closely related to another non-classical phenomenon: \emph{anomalous weak values}~\cite{dressel2014colloquium,tamir2013introduction}. For an observable $A$ with pre- and post-selected states $\rho_{I}$ and $\rho_{F}$, the weak value is
 \begin{equation}
\label{eq:wv}
\langle A\rangle _{w}:=\frac{\mathrm{Tr}[\rho_{F}A\rho_{I}]}{\mathrm{Tr}[\rho_{F}\rho_{I}]}.
\end{equation}
This definition of a weak value is a generalized version valid for arbitrary mixed states from Refs.~\cite{dressel2015weak,hall2004prior,dziewior2019universality,WG23} that recovers the original formulation from Ref.~\cite{aharonov1988result} whenever the pre- and post-selected states are pure. Assuming that $A = \sum_{\lambda} \lambda \vert \lambda \rangle \langle \lambda \vert  $ we can re-write Eq.~\eqref{eq:wv} as
\begin{equation}
\label{eq:wv_bargmanns}
\langle A\rangle _{w}=\sum_\lambda \lambda \,\frac{\mathrm{Tr}[\rho_{F}\vert \lambda \rangle \langle \lambda \vert \rho_{I}]}{\mathrm{Tr}[\rho_{F}\rho_{I}]} = \sum_\lambda \lambda \frac{\Delta(\rho_F, \Pi_\lambda, \rho_I)}{\Delta(\rho_F,\rho_I)}.
\end{equation}
where $\Delta(\rho_1,\ldots,\rho_n)\equiv \mathrm{Tr}[\rho_1 \cdots \rho_n]$ is a \emph{Bargmann invariant}~\cite{wagner2025coherenceandcontextuality,oszmaniec2024measuring,wagner2026bargmannscenarios,bargmann1964note,zhang2026surveybargmanninvariantsgeometric,zhang2025geometrysets,pratapsi2025elementarycharacterizationbargmanninvariants} of the tuple of states $(\rho_1,\ldots,\rho_n)$. The number of states $n$ denotes the \emph{order} of the invariant. 

These invariants are known to characterize all unitary-invariant properties of any set of quantum states~\cite{oszmaniec2024measuring,wigderson2019mathematics}, and their values are signatures of set coherence~\cite{wagner2024inequalities,wagner2024quantumcircuits,giordani2021witnesses,galvao2020quantum,giordani2023experimental,jones2023distinguishability,hance2024counterfactuality} and its variants~\cite{maosheng2026multistate,wagner2024certifying,zamora2025semi,xu2026bargmanninvariantsgaussianstates,fernandes2024unitary,zhang2024local,grassl1998computing}. More specifically, if a Bargmann invariant of the form $\Delta(\rho_I, \Pi_\lambda,\rho_F)$ has negative or nonzero imaginary part, then $\{\rho_I,\rho_F,\Pi_\lambda\}$ is set coherent. In particular, this implies that either $\rho_F$ or $\rho_I$ must be coherent relative to the basis $\{\Pi_\lambda\}_\lambda$. Thus, the values of a Bargmann invariant can straightforwardly witness coherence in the ordinary, basis-dependent sense, even though it captures coherence in the broader sense of  Ref.~\cite{designolle2021set}. In what follows, we use the values of  Bargmann invariants as a witness of coherence relative to the basis of the intermediate measurements in our PPS scenarios.

We say that the weak value $\langle A \rangle_w$ is \emph{anomalous} whenever it is outside of the interval $[\lambda_{\rm min},\lambda_{\rm max}] \subset \mathbbm{R}$ provided by the smallest and the largest eigenvalues $\lambda$ of the observable $A$. Anomalous weak values are known to require coherence, via the nonclassicality of the underlying Bargmann invariants: if either $\rho_{I}$ or $\rho_{F}$ is incoherent in the eigenbasis $\{\Pi_{\lambda}\}$, the eigenprojector weak values $\langle\Pi_{\lambda}\rangle_{w}$ form a genuine probability distribution~\cite{lund2010measuring,higgins2015using,degosson2012weak,WG23}, with $0\leq \langle\Pi_{\lambda}\rangle_{w}\leq 1$ and $\sum_{\lambda}\langle\Pi_{\lambda}\rangle_{w}=1$, so that $\langle A\rangle_{w}$ is non-anomalous~\cite{WG23}. Any anomaly therefore implies that there is at least some third-order Bargmann invariant appearing in the numerator of Eq.~\eqref{eq:wv_bargmanns} that is inconsistent with all underlying states being incoherent with respect to a common basis, since such invariants must then be negative or have nonzero imaginary part. 

Together with the findings of Refs.~\cite{aharonov2016weakvalues,waegell2023negativemass,hofmann2009resolutionquantumparadoxesweak,hance2023contextuality,correa2021apparent}, this connection between weak values, Bargmann invariants, and coherence via the relation
\begin{align*}
    \text{logical PPS}\Rightarrow \text{weak value anomaly}\Rightarrow \\ \exists \lambda: \Delta(\rho_I,\Pi_\lambda,\rho_F)\notin [0,1] \Rightarrow \text{coherence},
\end{align*}
makes it transparent that logical PPS paradoxes require  coherence not only through the second-order invariants $\mathrm{Tr}(\psi_I \psi_F)$ identified by Ref.~\cite{PuseyLeifer2015}, but through third-order invariants as well.

If instead the two pre- and post-selected states are allowed to be arbitrary \emph{mixed} states, non-orthogonality is captured by the two-state overlap $\mathrm{Tr}(\rho_I \rho_F)$ rather than by $\vert\langle\psi_I\vert\psi_F\rangle\vert^2$. In this case, even granting the prior information that $\rho_I \neq \rho_F$, the condition $0< \mathrm{Tr}(\rho_I \rho_F) < 1$ is no longer sufficient to guarantee set coherence of the pre- and post-selected states. Moreover, we would ideally like the case $\rho_I = \rho_F$ to remain well-defined whenever $\mathrm{Tr}(\rho_I\rho_F) \neq 0$, i.e. we would like our analysis to require \emph{no prior} on the states involved---not on their purity, not on whether they are distinct, and not even on the underlying Hilbert space. 

Furthermore, to the best of our knowledge, no connection has yet been drawn between anomalous weak values for mixed states, as in Eq.~\eqref{eq:wv}, and non-logical PPS paradoxes. Together, these observations leave open the possibility that some incoherent subtheory of quantum theory could satisfy the operational desiderata of an $N$-box PPS scenario while still reproducing some form of the paradox. In what follows we rule this out: we provide no-go results that both recover the logical case and strengthen the connection between PPS paradoxes and the negativity of the third-order Bargmann invariants underlying weak values.

\section{$N$-box paradoxes and coherence}\label{sec:results}

If we assume that either $\rho_I$ or $\rho_F$ are incoherent states relative to $B = \{\Pi_b\}_b$, then it is simple to see that they must satisfy the non-disturbance condition. This is so because, if we write  $\rho_I=\sum_{b'=1}^{N+1} p_{b'} \Pi_{b'}$ then it follows that
\begin{equation}
    \Pi_b \rho_I \Pi_b = \sum_{b'}p_{b'}\Pi_b \delta_{bb'} = p_b \Pi_b
\end{equation}
and also
\begin{equation}
    \Pi_b^\perp \rho_I \Pi_b^\perp = \sum_{b' \neq b}p_{b'}\Pi_{b'}.
\end{equation}
Therefore, 
\begin{equation}
    \Pi_b \rho_I \Pi_b + \Pi_b^\perp \rho_I \Pi_b^\perp  = \rho_I,
\end{equation}
and we recover the operational non-disturbance condition by multiplying both sides of this equation by $\rho_F$ and taking the trace.

With that, we can start proving our no-go result by re-writing $p(b,{\rm pass} \mid \rho_I,\mathcal{M}_b, \mathcal{M}_F)$ under the assumption that these must satisfy the  operational non-disturbance constraint from Eq.~\eqref{eq:op_non_dist}. Note that this condition can be written as
\begin{align}
&\mathrm{Tr}[\rho_{F}\rho_{I}] = \mathrm{Tr}[\rho_{F}\Pi_{b}\rho_{I}\Pi_{b}] + \mathrm{Tr}[\rho_{F}(\mathbb{1}-\Pi_{b})\rho_{I}(\mathbb{1}-\Pi_{b})]\Rightarrow\nonumber\\
&2\mathrm{Tr}[\rho_{F}\Pi_{b}\rho_{I}\Pi_{b}]=  \mathrm{Tr}[\rho_F\rho_I \Pi_b]+\mathrm{Tr}[\rho_I\rho_F \Pi_b]\Rightarrow\nonumber\\
\label{eq:quant_non_dist_alt}
&\mathrm{Tr}[\rho_{F}\Pi_{b}\rho_{I}\Pi_{b}] = \frac{\mathrm{Tr}[\rho_{F}\{\Pi_{b},\rho_{I}\}]}{2}.
\end{align}
Provided with this relation, we have then that we can write
\begin{align}
p(b,{\rm pass} \,\vert\, \rho_I,\mathcal{M}_b,\mathcal{M}_F)  &\stackrel{\eqref{eq:Nbox11}}{=} \mathrm{Tr}[\rho_{F}\Pi_{b}\rho_{I}\Pi_{b}]\\\label{eq:quant_non_dist_alt1}
& \stackrel{\eqref{eq:quant_non_dist_alt}}{=}\frac{\mathrm{Tr}[\rho_{I}\rho_{F}\Pi_{b}]+\mathrm{Tr}[\Pi_{b}\rho_{F}\rho_{I}]}{2}\\
\label{eq:quant_non_dist_alt2}
& = \frac{\mathrm{Tr}[(\Pi_{b}\rho_{F}\rho_{I})^{\dagger}]+\mathrm{Tr}[\Pi_{b}\rho_{F}\rho_{I}]}{2}\\
\label{eq:quant_non_dist_alt3}
& = \frac{\mathrm{Tr}[\Pi_{b}\rho_{F}\rho_{I}]^{*}+\mathrm{Tr}[\Pi_{b}\rho_{F}\rho_{I}]}{2}\\
\label{eq:quant_non_dist_alt4}
& = \mathrm{Re}\left(\mathrm{Tr}[\Pi_{b}\rho_{F}\rho_{I}] \right)
\end{align}
where $\{A,B\}=AB+BA$ is the anti-commutator of $A$ and $B$. To obtain Eq.~\eqref{eq:quant_non_dist_alt2} from Eq.~\eqref{eq:quant_non_dist_alt1} we use the fact that $\Pi_{j},\rho_{F}$ and $\rho_{I}$ are Hermitian. Eq.~\eqref{eq:quant_non_dist_alt3} is obtained from Eq.~\eqref{eq:quant_non_dist_alt2} using $\mathrm{Tr}[M^{\dagger}]=\mathrm{Tr}[M]^{*}$ for any matrix $M$.

Using the above we can prove the following no-go theorem:

\begin{theorem}[A coherence-free subtheory cannot yield an $N$-box PPS paradox]\label{thm:nonlogical}
For the $N$-box scenario in Def.~\ref{defNbox}, if either the pre- or the post-selected state is incoherent with respect to the intermediate measurement basis $B$, then Ineq.~\eqref{eq:probNbox} is satisfied.
\end{theorem}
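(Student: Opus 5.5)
The plan is to rewrite the left-hand side of Ineq.~\eqref{eq:probNbox} as a sum of real parts of third-order Bargmann invariants, and then use incoherence to show that the missing term is nonnegative.

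First, by the non-disturbance condition~\eqref{eq:op_non_dist}, which is part of Def.~\ref{defNbox} and in any case holds automatically when one of the states is incoherent (as shown above), the ABL denominator equals $\mathrm{Tr}[\rho_F\rho_I]\neq 0$. Combining this with Eq.~\eqref{eq:quant_non_dist_alt4} gives, for each $b\in\{1,\dots,N\}$,
\begin{equation*}
p(b\mid\rho_I,\rho_F,\mathcal{M}_b)=\frac{\mathrm{Re}\,\mathrm{Tr}[\Pi_b\rho_F\rho_I]}{\mathrm{Tr}[\rho_F\rho_I]}=\mathrm{Re}\,\langle\Pi_b\rangle_w .
\end{equation*}
Here the weak value is taken with the convention of Eq.~\eqref{eq:wv}; the trace is cyclic and the denominator is real and positive, so conjugation ordering does not matter.

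Next, sum over $b=1,\dots,N$ and use completeness, $\sum_{b=1}^{N+1}\Pi_b=\mathbb{1}$. This gives
\begin{equation*}
\sum_{b=1}^{N}p(b\mid\rho_I,\rho_F,\mathcal{M}_b)=1-\frac{\mathrm{Re}\,\mathrm{Tr}[\Pi_{N+1}\rho_F\rho_I]}{\mathrm{Tr}[\rho_F\rho_I]}.
\end{equation*}
Ineq.~\eqref{eq:probNbox} is therefore equivalent to $\mathrm{Re}\,\Delta(\Pi_{N+1},\rho_F,\rho_I)\ge 0$, i.e.\ to the weak value of the excluded box's projector having nonnegative real part. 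This reduction is purely algebraic.

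Finally, use incoherence. Suppose $\rho_I=\sum_{b'}p_{b'}\Pi_{b'}$. Then
\begin{equation*}
\mathrm{Tr}[\Pi_{N+1}\rho_F\rho_I]=p_{N+1}\langle N+1\vert\rho_F\vert N+1\rangle\ge 0,
\end{equation*}
since $\rho_I\Pi_{N+1}=p_{N+1}\Pi_{N+1}$ and $\rho_F\ge 0$. If instead $\rho_F$ is incoherent, the same computation gives $q_{N+1}\langle N+1\vert\rho_I\vert N+1\rangle\ge 0$. Either way the invariant is real and nonnegative, and the inequality follows.

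The main obstacle is not a calculation but the bookkeeping in the first step. We must check that~\eqref{eq:op_non_dist} really lets us replace the ABL denominator by $\mathrm{Tr}[\rho_F\rho_I]$, which must be strictly positive. This follows from the well-posedness assumption $p({\rm pass}\mid\rho_I,\mathcal{M}_F)\neq0$ together with positivity of the trace of a product of two positive operators. The constraint~\eqref{eq:no_cheating_condition} is not needed for this direction. The computation also suggests the sharper statement that the paradox occurs iff $\mathrm{Re}\,\langle\Pi_{N+1}\rangle_w<0$.
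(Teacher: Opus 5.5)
Your proof is correct and follows the paper's route. Non-disturbance together with Eq.~\eqref{eq:quant_non_dist_alt4} turns the ABL sum into $1-\mathrm{Re}\,\langle\Pi_{N+1}\rangle_w$, and incoherence then forces the excluded-box term to be nonnegative. The only difference is that you check $\mathrm{Tr}[\Pi_{N+1}\rho_F\rho_I]=p_{N+1}\langle N+1\vert\rho_F\vert N+1\rangle\geq 0$ (and its $\rho_F$-incoherent analogue) by direct computation, whereas the paper cites Ref.~\cite{WG23} for the weak values forming a probability distribution; your version is self-contained and uses only the lower bound that is actually needed.
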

\begin{proof}
Using Eq.~\eqref{eq:quant_non_dist_alt4} we can write 
\begin{align}
\label{eq:quantnon-dist_N_Box0}
\sum_{b=1}^{N}\mathrm{Tr}[\rho_{F}\Pi_{b}\rho_{I}\Pi_{b}] & \stackrel{\eqref{eq:quant_non_dist_alt4}}{=} \mathrm{Re}\left(\mathrm{Tr}\left[\sum_{b=1}^{N}\Pi_{b}\rho_{F}\rho_{I}\right]\right),\\
\label{eq:quantnon-dist_N_Box1}
& = \mathrm{Re}\left(\mathrm{Tr}\left[(\mathbb{1}-\Pi_{N+1})\rho_{F}\rho_{I}\right]\right),\\
\label{eq:quantnon-dist_N_Box2}
& = \mathrm{Tr}[\rho_{F}\rho_{I}]-\mathrm{Re}\left(\mathrm{Tr}[\Pi_{N+1}\rho_{F}\rho_{I}]\right),\\
\label{eq:quantnon-dist_N_Box3}
& = \mathrm{Tr}[\rho_{F}\rho_{I}]\L1-\mathrm{Re}\L\langle \Pi_{N+1}\rangle _{w}\R\R.
\end{align}
To get Eq.~\eqref{eq:quantnon-dist_N_Box3} from Eq.~\eqref{eq:quantnon-dist_N_Box2} we use Eq.~\eqref{eq:wv}, thus obtaining the relevant weak value for our effect. Recall that, by assumption on the scenario, we assume that $\mathrm{Tr}[\rho_F\rho_I] \neq 0$ so that we can consistently apply the ABL rule while simultaneously assume the validity of the operational non-disturbance condition~\eqref{eq:op_non_dist}.

Using the results from Ref.~\cite{WG23}, if either $\rho_{F}$ or $\rho_{I}$ is incoherent with respect to the  basis $B = \{\Pi_b\}_{b=1}^{N+1}$, then the weak values $\langle \Pi_{b} \rangle _{w}$ form a valid probability distribution (note that, here, $b$ goes up to $N+1$). Consequently, 
\begin{equation}
\label{bergamannbound}
 0\leq \langle \Pi_{N+1}\rangle_{w} \leq 1
\end{equation}
which can be used in Eq.~\eqref{eq:quantnon-dist_N_Box3} to obtain
\begin{align}
\sum_{b=1}^{N}p(b,{\rm pass}\mid \rho_I,\mathcal{M}_b,\mathcal{M}_F)&\leq \mathrm{Tr}[\rho_{F}\rho_{I}]\nonumber \\&=p({\rm pass}\mid \rho_I,\mathcal{M}_F).\label{eq:probNbox1}
\end{align}
Using the operational non-disturbance condition given by Eq.~\eqref{eq:op_non_dist} for each $b\in \{1,\ldots,N\}$ and the ABL rule of conditional probability given by
Eq.~\eqref{eq:ABL_rule}, we get
\begin{align}
&\sum_{b=1}^{N}p(b\mid \rho_I,\rho_F,\mathcal{M}_b)\stackrel{\eqref{eq:ABL_rule}}{=} \nonumber \\
&\sum_{b=1}^{N}\frac{p(b,{\rm pass}\mid \rho_I,\mathcal{M}_b,\mathcal{M}_F)}{p(b,{\rm pass}\mid \rho_I,\mathcal{M}_b,\mathcal{M}_F)+p(\neg b,{\rm pass}\mid \rho_I,\mathcal{M}_b,\mathcal{M}_F)}\nonumber \\
&\stackrel{\eqref{eq:op_non_dist}}{=}\sum_{b=1}^{N}\frac{p(b,{\rm pass}\mid \rho_I,\mathcal{M}_b,\mathcal{M}_F)}{p({\rm pass}\mid \rho_I,\mathcal{M}_F)} \stackrel{\eqref{eq:probNbox1}}{\leq} \frac{p({\rm pass}\mid \rho_I,\mathcal{M}_F)}{p({\rm pass}\mid \rho_I,\mathcal{M}_F)} \nonumber \\
&= 1.
\end{align}
This completes the proof.
\end{proof}

The relation \eqref{eq:quantnon-dist_N_Box3} admits a transparent interpretation, one that lets us view this result as an extension of earlier connections between anomalous weak values and the presence of a paradox in the $N$-box PPS scenario~\cite{aharonov2016weakvalues}. In particular, we now show that it is precisely the coherence witnessed by the \emph{negativity} of a third-order Bargmann invariant that quantifies the violation of Ineq.~\eqref{eq:probNbox}.

\begin{corollary}[Negative weak values of the excluded box quantify the non-logical PPS paradox]
\label{coro:anomalous}
For the $N$-box scenario in Def.~\ref{defNbox}, the sum of ABL statistics satisfy
\begin{equation}\label{eq:sumweakvalue}
\sum_{b=1}^{N}p(b\,\vert\,\rho_I,\rho_F,\mathcal{M}_b) = 1-\mathrm{Re}\big(\langle \Pi_{N+1}\rangle_{w}\big).
\end{equation}
Consequently, a paradox  occurs if and only if the weak value of the excluded projector $\Pi_{N+1}$ has negative real part,
\begin{equation}
\mathrm{Re}\big((\Pi_{N+1})_{w}\big) < 0,
\end{equation}
that is, if and only if $\langle \Pi_{N+1} \rangle _{w}$ is anomalous. 
\end{corollary}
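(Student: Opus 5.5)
The identity \eqref{eq:sumweakvalue} follows by reusing the chain of equalities from the proof of Theorem~\ref{thm:nonlogical}, stopping before the incoherence assumption is used. Only the constraints of Def.~\ref{defNbox} are needed: the non-disturbance condition~\eqref{eq:op_non_dist} and $\mathrm{Tr}[\rho_F\rho_I]\neq 0$.

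First, for each $b\in\{1,\dots,N\}$, condition~\eqref{eq:op_non_dist} turns the ABL denominator in Eq.~\eqref{eq:ABL_rule} into $p({\rm pass}\mid\rho_I,\mathcal{M}_F)=\mathrm{Tr}[\rho_F\rho_I]$. This gives
\begin{equation*}
\sum_{b=1}^N p(b\mid\rho_I,\rho_F,\mathcal{M}_b)=\frac{\sum_{b=1}^N \mathrm{Tr}[\rho_F\Pi_b\rho_I\Pi_b]}{\mathrm{Tr}[\rho_F\rho_I]}.
\end{equation*}
The numerator was already rewritten, using only non-disturbance (Eqs.~\eqref{eq:quant_non_dist_alt}--\eqref{eq:quant_non_dist_alt4}), as $\mathrm{Tr}[\rho_F\rho_I](1-\mathrm{Re}\langle\Pi_{N+1}\rangle_w)$ in Eq.~\eqref{eq:quantnon-dist_N_Box3}. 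That step relies on completeness, $\sum_{b=1}^{N}\Pi_b=\mathbb{1}-\Pi_{N+1}$.

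One point needs care here. $\mathrm{Tr}[\rho_F\rho_I]$ is real and positive, being the trace of a product of two positive operators, and it is nonzero by assumption. Therefore $\mathrm{Re}(\mathrm{Tr}[\Pi_{N+1}\rho_F\rho_I])/\mathrm{Tr}[\rho_F\rho_I]=\mathrm{Re}\langle\Pi_{N+1}\rangle_w$. I also have to check that the ordering $\mathrm{Tr}[\Pi_{N+1}\rho_F\rho_I]$ matches the definition~\eqref{eq:wv}, $\mathrm{Tr}[\rho_F\Pi_{N+1}\rho_I]$. These two traces are not equal in general, but they are complex conjugates of each other by the Hermiticity argument of Eq.~\eqref{eq:quant_non_dist_alt3}, so their real parts agree. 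Dividing through then yields Eq.~\eqref{eq:sumweakvalue}.

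The equivalence follows directly. By Eq.~\eqref{eq:sumweakvalue}, Ineq.~\eqref{eq:probNbox} is violated exactly when $1-\mathrm{Re}\langle\Pi_{N+1}\rangle_w>1$, that is, when $\mathrm{Re}\langle\Pi_{N+1}\rangle_w<0$.

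The final clause, ``if and only if anomalous'', is the step I expect to be the main obstacle. $\Pi_{N+1}$ has eigenvalues $\{0,1\}$, so being anomalous means $\langle\Pi_{N+1}\rangle_w\notin[0,1]$. A negative real part clearly implies anomaly. The converse requires ruling out $\mathrm{Re}\langle\Pi_{N+1}\rangle_w>1$, and also interpreting anomaly through the real part, since the weak value may be complex.

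For the upper side, I would use Eq.~\eqref{eq:sumweakvalue} together with the nonnegativity of each ABL probability. Each term on the left-hand side is a ratio of nonnegative numbers, so the sum is $\geq 0$, which forces $\mathrm{Re}\langle\Pi_{N+1}\rangle_w\le 1$. Hence, within the scenario, the only way the real part can leave $[0,1]$ is by becoming negative. Accordingly, I would state that anomaly is read on the real part in this sense, and conclude the corollary.
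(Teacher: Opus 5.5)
Your proposal is correct and follows the same route as the paper: divide Eq.~\eqref{eq:quantnon-dist_N_Box3} by $\mathrm{Tr}[\rho_F\rho_I]>0$, then use non-disturbance to turn each ratio into the corresponding ABL probability. Your extra checks fill in details the paper leaves implicit: that $\mathrm{Tr}[\Pi_{N+1}\rho_F\rho_I]$ and $\mathrm{Tr}[\rho_F\Pi_{N+1}\rho_I]$ are complex conjugates, and that nonnegativity of the ABL terms forces $\mathrm{Re}\langle\Pi_{N+1}\rangle_w\le 1$; reading anomaly on the real part is genuinely needed, since under the paper's definition a weak value with real part in $[0,1]$ but nonzero imaginary part would count as anomalous without giving a paradox.
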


\begin{proof}
Starting from Eq.~\eqref{eq:quantnon-dist_N_Box3}, established in the proof of \cref{thm:nonlogical},
\begin{equation}
\sum_{b=1}^{N}\mathrm{Tr}[\rho_{F}\Pi_{b}\rho_{I}\Pi_{b}] = \mathrm{Tr}[\rho_{F}\rho_{I}]\L1-\mathrm{Re}\big(\langle \Pi_{N+1}\rangle_{w}\big)\R,
\end{equation}
we divide both sides by $\mathrm{Tr}[\rho_{F}\rho_{I}]=p({\rm pass}\mid\rho_I,\mathcal{M}_F)$, which is nonzero by assumption. Applying, for each $b\in\{1,\ldots,N\}$, the operational non-disturbance condition \eqref{eq:op_non_dist} together with the ABL rule \eqref{eq:ABL_rule}---exactly as in the proof of \cref{thm:nonlogical}---turns the left-hand side into $\sum_{b=1}^{N}p(b\mid\rho_I,\rho_F,\mathcal{M}_b)$, yielding the identity~\eqref{eq:sumweakvalue}. From that, the left-hand side of Eq.~\eqref{eq:sumweakvalue} exceeds $1$ if and only if $\mathrm{Re}\big(\langle \Pi_{N+1}\rangle_{w}\big)<0$, ie., the weak value is anomalous. 
\end{proof}

It is simple to see that whenever the pre- and post-selected states satisfy that $p({\rm pass}\mid \rho_I,\mathcal{M}_F) = \mathrm{Tr}[\rho_I\rho_F] = 1$ we have that 
\begin{equation*}
    \langle \Pi_{N+1}\rangle_w = \frac{\mathrm{Tr}[\rho_F \Pi_{N+1}\rho_I]}{\mathrm{Tr}[\rho_F\rho_I]} =\langle \psi \vert \Pi_{N+1} \vert \psi \rangle \geq 0
\end{equation*}
where we have used that $\mathrm{Tr}[\rho_I\rho_F]=1 \Rightarrow \rho_I = \rho_F = \psi$~\cite{vaidman1996weak}. Since this weak value is nonnegative, we get from Corollary~\ref{coro:anomalous} that a paradox is impossible. 

We can now see a few explicit examples of how the negativity of this particular weak value relates to anomaly, in this quantifiable manner. We start with the  logical three-box paradox reviewed earlier.

\begin{figure}[t]
\centering
\begin{tikzpicture}[>=Latex, font=\small]
\fill[blue!8] (0,-0.14) rectangle (3,0.14);
\fill[red!12] (3,-0.14) rectangle (6,0.14);
\node[font=\footnotesize] at (1.4,0.46) {Classical (no paradox)};
\node[font=\footnotesize] at (4.5,0.46) {Quantum (paradox)};
\draw[dashed] (3,-0.14)--(3,1);
\node[above, font=\footnotesize, align=center] at (3,0.98)
  {$p(b=1 \mid \rho_I,\rho_F,\mathcal{M}_b)+p(b=2 \mid \rho_I,\rho_F,\mathcal{M}_b)=1$};
\draw[->,thick] (-0.25,0)--(6.6,0);
\foreach \x/\s/\r in {0/0/1, 3/1/0, 6/2/{-1}}{
  \draw (\x,0.14)--(\x,-0.14);
  \node[font=\footnotesize] at (\x,-0.35) {$\s$};
  \node[font=\footnotesize, gray] at (\x,-0.72) {$\r$};
}
\node[left, font=\footnotesize] at (-0.2,-0.35) {$\Sigma$:};
\node[left, font=\footnotesize, gray] at (-0.2,-0.72) {Re:};
\fill (6,0) circle (1.8pt);
\node[font=\footnotesize] at (6,-1.08) {Optimal};
\end{tikzpicture}
\caption{\textbf{The 3-box paradox as a function of $\mathrm{Re}\left(\langle \Pi_3\rangle_w\right)$.} Under the operational non-disturbance condition~\eqref{eq:op_non_dist}, the $3$-box statistics obey $\Sigma\equiv\sum_{b=1}^{2}p(b\mid \rho_I,\rho_F,\mathcal{M}_b)=1-\mathrm{Re}\big(\langle \Pi_{3}\rangle_w\big)$ (see~\cref{coro:anomalous}); the lower (grey) axis shows the corresponding $\mathrm{Re}\big(\langle \Pi_{3}\rangle _w\big)$. The classicality bound $\Sigma\le1$, Ineq.~\eqref{eq:probNbox}, is violated  in the pink shaded ``paradox'' region, where the excluded box has an anomalous weak value $\mathrm{Re}\big(\langle \Pi_{3}\rangle_w\big)<0$. The logical three-box paradox saturates the algebraic maximum $\Sigma=2$, i.e.\ $\langle \Pi_3\rangle_w=-1$.}
\label{fig:bound}
\end{figure}

\begin{exa}[Weak values and the three-box paradox]
\label{ex:threebox}
For the paradigmatic three-box paradox ($N=2$), with $\ket{\psi_{I}}=\tfrac{1}{\sqrt{3}}(\ket{1}+\ket{2}+\ket{3})$, $\ket{\psi_{F}}=\tfrac{1}{\sqrt{3}}(\ket{1}+\ket{2}-\ket{3})$ and excluded box $\Pi_{3}=\ketbra{3}{3}$, one has $\mathrm{Tr}(\rho_I\rho_F) = \vert \braket{\psi_{F}}{\psi_{I}}\vert^2=\sfrac{1}{9}$, and
\begin{align}
\langle \Pi_{3} \rangle_{w}=\frac{\mathrm{Tr}[\rho_F\Pi_{3}\rho_{I}]}{\mathrm{Tr}[\rho_{F}\rho_{I}]}
&=\frac{\braket{\psi_I}{\psi_F}\braket{\psi_F}{3}\braket{3}{\psi_I}}{|\braket{\psi_{F}}{\psi_{I}}|^{2}}\nonumber\\
&=\frac{\big(\tfrac{1}{3}\big)\big(-\tfrac{1}{\sqrt{3}}\big)\big(\tfrac{1}{\sqrt{3}}\big)}{\sfrac{1}{9}}=-1.
\end{align}
By Eq.~\eqref{eq:sumweakvalue}, $\sum_{b=1}^{2}p(b \mid \psi_I,\psi_F,\mathcal{M}_b)=1-(-1)=2$, the maximal logical paradox, the same result we have found earlier in Sec.~\ref{sec:scenario}.  The excluded box thus carries the extreme anomalous weak value $-1$, in agreement with Eqs.~\eqref{eq:3box11}--\eqref{eq:3null}.
\end{exa}

This example complements the analysis from Ref.~\cite{waegell2023negativemass}, solidifying the role of weak value anomaly in the $N$-box scenario. We can also see a different case of a family of paradoxes which happen at the non-logical level, yet are also quantified by the degree of violation relative to the negativity of the weak value $\langle \Pi_{N+1} \rangle_w$.

\begin{exa}[A family of non-logical three-box paradoxes]
We now construct an example of a paradox with mixed pre- and post-selected states. Let
\begin{align}
\rho_I&=p_i\,\ket{\psi_I}\bra{\psi_I}\,\,+(1-p_i)\,\ket{3}\bra{3},\\
\rho_F&=p_f\ket{\psi_F}\bra{\psi_F}+(1-p_f)\ket{3}\bra{3},
\end{align}
where $\ket{\psi_I}$ and $\ket{\psi_F}$ are taken to be same as in~\cref{ex:threebox} and $0\leq  p_i,p_f<  1$.
The intermediate measurements are taken to be the computational basis projectors $\Pi_1 \equiv \vert 1 \rangle \langle 1 \vert$ and $\Pi_2 \equiv \vert 2 \rangle \langle 2 \vert$. For the above choice, using Eqs.~\eqref{eq:Nbox11},~\eqref{eq:Nbox01}~and \eqref{eq:null} we have 
\begin{align*}
p(b, {\rm pass} \mid \rho_I, \mathcal{M}_b,\mathcal{M}_F)& =  \frac{p_{i}p_{f}}{9},\\
p(\neg b,{\rm pass} \mid \rho_I, \mathcal{M}_b, \mathcal{M}_F)&=\frac{p_{f}p_{i}+3-2(p_{f}+p_{i})}{3},\\
p({\rm pass} \mid \rho_I, \mathcal{M}_F)&= \frac{(3-2p_{f})(3-2p_{i})}{9}.
\end{align*}
for $b\in\{1,2\}$. It can be checked that this statistics satisfy the operational non-disturbance condition given by Eq.~\eqref{eq:op_non_dist} for all values of $p_{i}$ and $p_{f}$. One can obtain the ABL probabilities as 
\begin{equation}
p(b \mid \rho_I, \rho_F, \mathcal{M}_b) = \frac{p_{i}p_{f}}{(3-2p_{i})(3-2p_{f})}.
\end{equation}
Furthermore, the weak value $\langle \Pi_{3} \rangle _{w}$ is given by 
\begin{equation}
\langle \Pi_{3}\rangle _{w}=\frac{9+2p_{f}p_{i}-6(p_{f}+p_{i})}{(3-2p_{f})(3-2p_{i})}.
\end{equation}
The paradox appears when $\text{Re} \left(\langle \Pi_{3}\rangle_{w}\right)<0$. In this case, the region of $p_{i}$ and $p_{f}$ which satisfies this criterion is given by
\begin{equation}
6(p_{i}+p_{f})-2p_{i}p_{f}>9.
\end{equation}
For all those values one has a non-logical instance of a three-box paradox.
\end{exa}

While the violation of Ineq.~\eqref{eq:probNbox}, under the assumptions in Def.~\ref{defNbox}, implies coherence, the converse does not hold. It is possible that coherent quantum states satisfy all the requirements of the scenario and nevertheless yield no paradox. 

\begin{exa}[Coherent states which yield no paradox]
    Let us consider pre- and post-selection according to the quantum states
    \begin{align}
        \rho_I(\theta)&=\frac{1}{3}\mathbb{1}_{3}+\frac{\cos(\theta)}{6}\left(\vert 1 \rangle \langle 2 \vert + \vert 2 \rangle \langle 1 \vert \right),\\ \rho_F(\theta)&=\frac{1}{3}\mathbb{1}_{3}+\frac{\sin(\theta)}{6}\left(i\vert 1 \rangle \langle 2 \vert - i\vert 2 \rangle \langle 1 \vert \right),
    \end{align}
    with $\theta \in [0,2\pi)$ and $\mathbb{1}_{3} = \sum_{b=1}^3 \vert b \rangle \langle b \vert$. Their overlap is equal to $\mathrm{Tr}[\rho_I\rho_F]=\sfrac{1}{3}$ for all $\theta$, and they satisfy the operational non-disturbance condition~\eqref{eq:op_non_dist} since
    \begin{align*}
        p(b,{\rm pass}\mid \rho_I,\mathcal{M}_b,\mathcal{M}_F) = \mathrm{Tr}[\rho_F \Pi_b \rho_I \Pi_b] = \frac{1}{9},\\
        p(\neg b,{\rm pass}\mid \rho_I,\mathcal{M}_b,\mathcal{M}_F) = \mathrm{Tr}[\rho_F \Pi_b^\perp \rho_I \Pi_b^\perp] = \frac{2}{9},
    \end{align*}
    for $b\in \{1,2\}$. Both of these states are coherent with respect to the intermediate measurement basis $\{\vert b \rangle\}_{b=1}^3$ since they both have non-zero off-diagonal terms $\langle 1 \vert \rho \vert 2 \rangle$. Nevertheless, they also do not allow for a paradox since
    \begin{equation}
        \sum_{b=1}^2p(b \mid \rho_I,\rho_F,\mathcal{M}_b) = \frac{\sfrac{1}{9}}{\sfrac{1}{3}}+\frac{\sfrac{1}{9}}{\sfrac{1}{3}} = \frac{2}{3} <1.
    \end{equation}
\end{exa}

We have thus shown that a violation of Ineq.~\eqref{eq:probNbox} implies that both pre- and post-selected states are coherent with respect to the intermediate measurement basis; it follows that the logical $N$-box paradox also requires quantum coherence. This conclusion, however, rests so far on the operational non-disturbance constraint of Eq.~\eqref{eq:op_non_dist}. As we discuss in Sec.~\ref{sec:LG}, this is not incidental: Maroney~\cite{Maroney1,MARONEY2} showed that, under this constraint, Ineq.~\eqref{eq:probNbox} can be recast as a Leggett--Garg inequality, whose violation is itself well known to require some form of quantum coherence. We now proceed to see how the exact link we establish, via the weak value of the excluded box, fits with this previous result.

\subsection{Revisiting the connection with Leggett and Garg's notion of macrorealism}\label{sec:LG}

The two operational constraints identified in Sec.~\ref{sec:scenario}---Eq.~\eqref{eq:no_cheating_condition} and the operational non-disturbance condition of Eq.~\eqref{eq:op_non_dist}---are not \emph{ad hoc}: they correspond, respectively, to the two assumptions Maroney identifies as jointly constituting a classical, macrorealist account of the three-box paradox~\cite{Maroney1,MARONEY2}, in the sense of Leggett and Garg's notion of macrorealism~\cite{leggett1985quantum}. We now discuss how this correspondence, together with the recent work of Schmid~\cite{schmid2024reviewreformulation}, offers another perspective on why coherence-free subtheories cannot reproduce the operational predictions of the $N$-box scenario.

\emph{No-cheating and macrorealism per se.---}In the following we refer to the condition of Eq.~\eqref{eq:no_cheating_condition} as the \emph{no-cheating} condition, since it rules out the simple ``cheating'' classical toy model described by Maroney~\cite[Appendix A.4]{MARONEY2}. Maroney's own statement of macrorealism, in the three-box scenario, is that:
\begin{quote}
``\emph{In the context of the three box paradox, macrorealism is simply the claim that the ball is, at any time, in one box, and only in one box.}''~\cite{Maroney1}.
\end{quote}
This is precisely what the no-cheating condition~\eqref{eq:no_cheating_condition} enforces operationally, prior to post-selection: that sequential intermediate measurements never find the particle in two boxes at once.

Crucially, Maroney stresses that this assumption alone---sometimes called \emph{macrorealism per se}---is ontological (i.e., a commitment concerning the underlying theory that attempts to explain the data~\cite{harrigan2010Einstein}), not operational. This is why, in our treatment, we have deliberately avoided referring to Eq.~\eqref{eq:no_cheating_condition} as ``the macrorealism condition,'' reserving that term for the ontological commitment above. Equation~\eqref{eq:no_cheating_condition} is instead an operationalization of the requirement that one be able to test whether there are two particles present rather than one.

\emph{Non-disturbance and non-invasive measurability.---}The second assumption central to Leggett and Garg's notion of macrorealism, which converts macrorealism \emph{per se} into an empirically testable claim, is \emph{non-invasive measurability}: the requirement that it be possible, in principle, to determine which box the ball occupies without disturbing it, so that the statistics of a later measurement do not depend on whether (or how) an earlier box-check was performed.

This is exactly what the operational non-disturbance condition~\eqref{eq:op_non_dist} captures, and it is the assumption under which the classicality bound~\eqref{eq:probNbox} relates to a genuine Leggett--Garg inequality~\cite{leggett1985quantum}. This bound holds for any theory satisfying macrorealism \emph{per se} together with non-invasive measurability. 

A recent reformulation of macrorealism is due to Schmid~\cite{schmid2024reviewreformulation}, who argues that both macrorealism \emph{per se} and non-invasive measurability (together with the various technical difficulties that have accumulated around tests of Leggett--Garg inequalities) can be replaced by a single, cleaner criterion: a theory is macrorealist if and only if every macroscopic system is described by a system in a \emph{strictly classical}~\cite{schmid2021characterization,schmid2024reviewreformulation,selby2023accessible,schmid2025shadowssubsystemsof} generalized probabilistic theory (GPT)~\cite{plavala2023general}. GPTs are a broad framework for formalizing physical theories beyond quantum theory, of which quantum theory itself is a particular instance. Within quantum theory, a strictly classical fragment can be viewed as one captured by an incoherent subtheory of quantum theory relative to some orthonormal basis. 

Under this perspective, it becomes clear that the violation, by quantum theory, of the bound~\eqref{eq:probNbox} is precisely what \cref{thm:nonlogical} and \cref{coro:anomalous} trace back to coherence. In Schmid's~\cite{schmid2024reviewreformulation} sense---and assuming quantum theory as the relevant underlying theory---any violation of such a Leggett--Garg inequality witnesses the fact that an incoherent subtheory cannot reproduce these effects, thereby witnessing the coherence of the relevant states and effects.

\section{Quantum pigeonhole principle}\label{sec:pigeon}

We now extend our findings to a distinct PPS scenario, related to a paradox known as the \emph{quantum pigeonhole principle}~\cite{aharonov2016quantum}. The (classical) pigeonhole principle~\cite{rittaud2013pigeonhole} states the following~\cite{aharonov2016quantum}:
\begin{quote}
    ``\emph{If you put three pigeons in two pigeonholes, at least two of the pigeons end up in the same hole.}''
\end{quote}
This principle embeds a certain type of classical intuition regarding how one should \emph{count}, that can also be (seemingly) challenged by an instance of a PPS scenario. 

Aharonov \emph{et al.}~\cite{aharonov2016quantum} proposed the following PPS scenario where we consider a system of three qubits---interpreted as characterizing three perfectly distinguishable particles (``pigeons'') that can be put in two boxes (``pigeonholes''), so the full system lives in $\mathcal{H}=(\mathbbm{C}^2)^{\otimes3}$---each having a specified ``box'' basis, that we will take to be written as $\{\ket{L},\ket{R}\}$ for a ``left'' and ``right'' box.  The product basis $C=\{\ket{c_1c_2c_3}\}_{c_i\in\{L,R\}}$ (having $8$ elements) provides a fully separable orthonormal basis for the full space. 

For example, if we prepare the system to be initially in the state $\vert \psi_I \rangle = \vert L\rangle \vert L \rangle \vert L \rangle $ this is to be interpreted as the fact that all three particles are in the left box. A measurement of the form $\{\vert L \rangle \langle L \vert \otimes \mathbb{1}_{2 \otimes 2}, \vert R \rangle \langle R \vert \otimes \mathbb{1}_{2 \otimes 2}\}$ is then interpreted as a measurement that checks whether the first particle is in the left or in the right box. As before, these are strong commitments to a classical interpretation of what measurements do (which \emph{per se}, are known to be in tension with various celebrated no-go results~\cite{kochen1967problem,budroni2022kochenspecker,bell1966problem}) and should be taken here as a means to expose the kind of way this classical reasoning leads to paradoxical conclusions.

As before, we define a family of intermediate measurements which in this case will be related, naturally, to checking whether there are two particles in the same box, since we are trying to investigate the validity of the pigeonhole principle. Thus, for each unordered pair $\{i,j\}\subset\{1,2,3\}$, we define the ``same-box'' projector $\Pi_{ij}^{\rm same}$ as
\begin{equation}
\Pi^{\rm same}_{ij} = \ketbra{LL}{LL}_{ij}\otimes{\mathbb{1}_2} + \ketbra{RR}{RR}_{ij}\otimes{\mathbb{1}_2},
\end{equation}
where the identities apply in the subspace $k\in \{1,2,3\}\setminus \{i,j\}$ of the third particle. The dichotomic intermediate measurement $\mathcal{M}_{ij}=\{\Pi^{\rm same}_{ij},\mathbb{1}_8-\Pi^{{\rm same}}_{ij}\}$ is interpreted as asking the question ``\emph{are particles $i$ and $j$ in the same box?}''. Note here the close structure relative to the $N$-box paradox from Sec.~\ref{sec:scenario}. Writing ``${\rm same}_{ij}$'' and ``${\rm diff}_{ij}$'' for the two outcomes of $\mathcal{M}_{ij}$, the ABL rule~\eqref{eq:ABL_rule} is, for pre- and post-selected states $\rho_I,\rho_F$, given by 
\begin{equation}
\label{eq:pigeon-ABL}
p({\rm same}_{ij}\mid\rho_I,\rho_F,\mathcal{M}_{ij}) = \frac{p({\rm same}_{ij},{\rm pass})}{p({\rm same}_{ij},{\rm pass})+p({\rm diff}_{ij},{\rm pass})}.
\end{equation}
Above, we have denoted $$p({\rm same}_{ij},{\rm pass}) \equiv p({\rm same}_{ij},{\rm pass} \mid \rho_I, \mathcal{M}_{ij}, \mathcal{M}_F)$$ for brevity, as we did in Eq.~\eqref{eq:ABL_rule}.

We are now ready to provide an example of a PPS paradox for this scenario, which will be, as before, an instance of how the classical interpretational baggage, together with reasoning via the ABL rule, leads to a paradox. Let us consider for the pre-selection the state
\begin{equation}
\ket{\psi_I}=\ket{+}\ket{+}\ket{+}
\end{equation}
and for the post-selection
\begin{equation}
\ket{\psi_F}=\ket{+_i}\ket{+_i}\ket{+_i},
\end{equation}
where $\ket{+}=(\ket{L}+\ket{R})/\sqrt{2}$ and $\ket{+_i}=(\ket{L}+i\ket{R})/\sqrt{2}$. A direct computation gives $p({\rm diff}_{ij}\mid\rho_I,\rho_F,\mathcal{M}_{ij})=1$ for \emph{every} pair. This leaves us with the paradoxical counterfactual reasoning: were we to check whether there are two particles in the same box, for any pair of particles, we would find them certainly in different boxes. Yet classically, any definite assignment of each particle to $L$ or $R$ must place at least one pair together---hence the paradox.

Before adjudicating the minimal operational desiderata for a paradox to truly be present, it is moreover somewhat clear that there is a simple non-logical extension of the paradox. To see this, note that a definite classical assignment of the three particles to $L$ or $R$ produces exactly one coinciding pair if the assignment is non-uniform (e.g.\ $LLR$, six such cases out of eight) and all three coinciding pairs if it is uniform ($LLL$ or $RRR$, two cases out of eight); the number of coinciding pairs is thus never zero, but it \emph{is} exactly one for the generic (non-uniform) assignments. Averaging over any classical mixture of assignments therefore gives the tight bound
\begin{equation}
\label{eq:pigeon-ineq}
\sum_{\{i,j\}}p({\rm same}_{ij}\mid\rho_I,\rho_F,\mathcal{M}_{ij}) \geq 1,
\end{equation}
which cannot be strengthened, since the generic assignments already saturate it at exactly $1$. 

We say the ABL statistics exhibit a \emph{pigeonhole paradox} whenever they \emph{violate} Eq.~\eqref{eq:pigeon-ineq}. A \emph{logical} pigeonhole paradox is the extremal instance of this, where $p({\rm same}_{ij},{\rm pass})=0$ for all three pairs simultaneously---the optimal case for a paradox, in which the ABL rule predicts, with certainty, that \emph{no} pair will ever coincide. This is exactly the situation provided by Aharonov \emph{et al.}~\cite{aharonov2016quantum}.

\subsection{Adjudicating the operational assumptions underlying  the pigeonhole paradox}

As in the $N$-box PPS scenario, we can distinguish two aspects of the pigeonhole scenario: a purely \emph{operational} one, detached from any interpretation and simply characterized by the statistical predictions entering the ABL rule, together with additional empirical constraints, and an \emph{interpretational} one, where the paradoxical conclusion is drawn only once further classical assumptions are imposed on top of those statistics. Isolating exactly which operational assumptions are key for a paradoxical reading is, again, essential. 

At least two such assumptions are required, mirroring those identified by Maroney~\cite{Maroney1,MARONEY2} for the three box scenario. First, we must require that exactly three particles (never fewer) are always present. The classical pigeonhole bound of Eq.~\eqref{eq:pigeon-ineq} is a fact about \emph{three} objects distributed over \emph{two} boxes. If one allows for the possibility that two particles be always present, and one is only made to believe that there are three particles, there is no paradox associated with the fact that we never ``see''---i.e., infer using the ABL rule---two particles in the same box. This requirement mirrors the constraint imposed on Eq.~\eqref{eq:no_cheating_condition} in the three-box paradox, which is in turn associated with avoiding the type of ``cheating'' classical toy model introduced by Maroney~\cite{MARONEY2}. Operationally, this is expressed by requiring a specification of the scenario prior to the post-selection, allowing one to check that sequential intermediate measurements never find one of the three particles missing. 

The pigeonhole scenario therefore requires a different operationalization than the $N$-box scenario, since here the classical intuition to be tested is not about a single particle's location, but about how many of \emph{three} distinguishable particles can be found sharing a box. Notwithstanding this structural difference, it is still possible to formulate a directly testable operational signature that mirrors Eq.~\eqref{eq:no_cheating_condition}. For any pre-selection, 
\begin{align}
p({\rm diff}_{12},{\rm diff}_{13},{\rm diff}_{23}&\mid\rho_I,\mathcal{M}_{12},\mathcal{M}_{13},\mathcal{M}_{23})= 0 \label{eq:no_cheating_pigeonhole},
\end{align}
and likewise for any other order in which the three pair-checks are performed. Operationally, Eq.~\eqref{eq:no_cheating_pigeonhole} means that in any single run of the experiment (without the post-selection), it is impossible to find all three pairs of particles in different boxes. For the specific case shown, this means that sequentially finding particles 1 and 2 in different boxes, and particles 2 and 3 in different boxes already guarantees that particles 1 and 3 will be found in the same box. The third check can never independently report that the final pair of particles are in different boxes. 

Second, we must still abide to some notion of operational non-disturbance of the post-selection probability. As in the $N$-box case, any intermediate measurement $\mathcal{M}_{ij}$ could in principle disturb the system in a way that alters the probability of successful post-selection, and such disturbance---if detectable in the final statistics---may allow for a classical toy model to reproduce the statistics of the quantum pigeonhole principle while evading paradoxical conclusions (similarly to the discussion present in Refs.~\cite{LeiferSpekkensontology,ravon2007three}). We therefore require once more a form of operational non-disturbance condition, that for each pair $\{i,j\}$,
\begin{align}
&p({\rm same}_{ij},{\rm pass}) + p({\rm diff}_{ij},{\rm pass}) = p({\rm pass}\mid\rho_I,\mathcal{M}_F), \label{eq:pigeon-nondist}
\end{align}
i.e.\ the probability that post-selection succeeds does not depend on whether (or which) pair-check was performed at the intermediate step. Above, we have written $p({\rm same}_{ij},{\rm pass}) \equiv p({\rm same}_{ij},{\rm pass} \mid \rho_I,\mathcal{M}_{ij},\mathcal{M}_F)$, and also for $p({\rm diff}_{ij},{\rm pass})$,  for brevity. Exactly as in Sec.~\ref{sec:scenario}, this condition is automatically satisfied whenever either $\rho_I$ or $\rho_F$ is incoherent with respect to the relevant basis (here, the eigenbasis of $\Pi^{\rm same}_{ij}$ for each pair), so imposing it costs nothing on the classical side of the argument.

As before, we organize these in a definition, for future referencing.

\begin{definition}[Pigeonhole PPS scenario and paradoxes]\label{defPigeon}
A \emph{pigeonhole PPS scenario} is defined by a pre-selected state $\rho_I$, a post-selected state $\rho_F$, both on $\mathcal{H}=(\mathbbm{C}^2)^{\otimes3}$, and the product basis $C=\{\ket{c_1c_2c_3}\}_{c_i\in\{L,R\}}$ (with $\Pi_c=\ketbra{c}{c}$) for the intermediate measurements. Write 
\begin{equation}\label{eq:labelset}
P_3 := \big\{\{i,j\} : i,j\in\{1,2,3\},\ i\neq j\big\}
\end{equation}
for the set of unordered pairs of particle indices. For each pair $\{i,j\}\in P_3$, the intermediate measurement $\mathcal{M}_{ij}=\{\Pi^{\rm same}_{ij},\Pi^{\rm diff}_{ij}\}$ is performed, where
\begin{equation}\label{eq:pigeonhole_intermediate}
\Pi^{\rm same}_{ij}=\ketbra{LL}{LL}_{ij}\otimes\mathbb{1}_2+\ketbra{RR}{RR}_{ij}\otimes\mathbb{1}_2,
\end{equation}
and
\begin{equation}\label{eq:pigeonhole_diff}
\Pi^{\rm diff}_{ij} := \mathbb{1}_8-\Pi^{\rm same}_{ij}.
\end{equation}
Furthermore, the statistics must satisfy Eq.~\eqref{eq:no_cheating_pigeonhole}, the operational non-disturbance condition~\eqref{eq:pigeon-nondist} for each pair $\{i,j\}$, and that $p({\rm pass}\mid\rho_I,\mathcal{M}_F)\neq0$. We say that the ABL rule leads to a \emph{pigeonhole paradox} in this scenario whenever Ineq.~\eqref{eq:pigeon-ineq} is violated. Additionally, we say that it leads to a \emph{logical} pigeonhole paradox whenever $p({\rm same}_{ij},{\rm pass})=0$ for all three pairs $\{i,j\} \in P_3$ simultaneously.
\end{definition}

\section{Quantum pigeonhole principle and coherence}

As in Sec.~\ref{sec:results}, incoherence of either the pre-selected $\rho_I$ or the post-selected $\rho_F$ states with respect to the product basis $C=\{\ket{c_1c_2c_3}\}_{c_i\in\{L,R\}}$ forces the non-disturbance condition~\eqref{eq:pigeon-nondist} automatically, for any pair $\{i,j\}$. To see this, if we write $$\rho_I=\sum_{\vert c_1c_2c_3\rangle \in C}p_{(c_1,c_2,c_3)}\ketbra{c_1c_2c_3}{c_1c_2c_3},$$ noting that $\Pi^{\rm same}_{ij}$ is a sum of basis projectors of $C$---and therefore, is itself diagonal in $C$---one has $$\Pi^{\rm same}_{ij}\rho_I\Pi^{\rm same}_{ij}+\Pi^{\rm diff}_{ij}\rho_I\Pi^{\rm diff}_{ij}=\rho_I.$$ Multiplying by $\rho_F$ and taking the trace on both sides recovers Eq.~\eqref{eq:pigeon-nondist} for any $\rho_F$. 

Given the operational non-disturbance condition~\eqref{eq:pigeon-nondist}, the same anticommutator identity used in Eqs.~\eqref{eq:quant_non_dist_alt}-\eqref{eq:quant_non_dist_alt4} applies verbatim to $\Pi^{\rm same}_{ij}$---the argument only used that $\Pi^{\rm same}_{ij}$ is an orthogonal projector, not that it has rank one---giving
\begin{align}
p({\rm same}_{ij},{\rm pass}) &=\mathrm{Tr}[\rho_F \Pi_{ij}^{\rm same} \rho_I \Pi_{ij}^{\rm same}]\\
&= \mathrm{Re}\big(\mathrm{Tr}[\Pi^{\rm same}_{ij}\rho_F\rho_I]\big). \label{eq:pigeon-antic}
\end{align}
Writing 
\begin{equation}\label{eq:theta_operator}
\Theta:=\Pi^{\rm same}_{12}+\Pi^{\rm same}_{13}+\Pi^{\rm same}_{23},
\end{equation}
and summing Eq.~\eqref{eq:pigeon-antic} over the three pairs $\{i,j\} \in P_3$ (see Eq.~\eqref{eq:labelset}) gives
\begin{align}
\sum_{\{i,j\} \in P_3}p({\rm same}_{ij},{\rm pass}) &= \mathrm{Re}\big(\mathrm{Tr}[\Theta\rho_F\rho_I]\big) \nonumber \\&= \mathrm{Tr}[\rho_F\rho_I]\,\mathrm{Re}\big(\langle\Theta\rangle_w\big),\label{eq:pigeon-sum-antic}
\end{align}
using the definition of the weak value, Eq.~\eqref{eq:wv}, for the observable $\Theta$.

\begin{theorem}[A coherence-free subtheory cannot yield a pigeonhole paradox]\label{thm:pigeon-nonlogical}
For the pigeonhole scenario of Def.~\ref{defPigeon}, if either $\rho_I$ or $\rho_F$ is incoherent with respect to the product basis $C$, then Ineq.~\eqref{eq:pigeon-ineq} is satisfied.
\end{theorem}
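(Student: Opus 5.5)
The plan mirrors the proof of \cref{thm:nonlogical}, with $\Theta$ in Eq.~\eqref{eq:theta_operator} playing the role of $\mathbb{1}-\Pi_{N+1}$. The starting point is the identity~\eqref{eq:pigeon-sum-antic}:
\[
\sum_{\{i,j\}\in P_3} p({\rm same}_{ij},{\rm pass}) = \mathrm{Tr}[\rho_F\rho_I]\,\mathrm{Re}\big(\langle\Theta\rangle_w\big).
\]
This identity already uses the non-disturbance condition~\eqref{eq:pigeon-nondist}, which in any case holds automatically under the incoherence hypothesis. The denominator of each ABL ratio~\eqref{eq:pigeon-ABL} can then be replaced by $p({\rm pass}\mid\rho_I,\mathcal{M}_F)=\mathrm{Tr}[\rho_F\rho_I]\neq 0$, exactly as in the $N$-box case. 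This gives the pigeonhole analogue of \cref{coro:anomalous}:
\[
\sum_{\{i,j\}\in P_3} p({\rm same}_{ij}\mid\rho_I,\rho_F,\mathcal{M}_{ij}) = \mathrm{Re}\big(\langle\Theta\rangle_w\big).
\]
So it suffices to show that $\mathrm{Re}(\langle\Theta\rangle_w)\geq 1$ whenever $\rho_I$ or $\rho_F$ is incoherent in $C$.

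Next I would diagonalize $\Theta$ in the product basis $C$. Each $\Pi^{\rm same}_{ij}$ is a sum of projectors $\Pi_c$ with $c\in C$, so $\Theta=\sum_{c\in C} n(c)\,\Pi_c$. Here $n(c)$ is the number of coinciding pairs in the assignment $c=(c_1,c_2,c_3)$. By the counting argument preceding Ineq.~\eqref{eq:pigeon-ineq}, $n(c)=3$ for $LLL,RRR$ and $n(c)=1$ for the six non-uniform assignments. Hence $\Theta=\mathbb{1}_8+2(\Pi_{LLL}+\Pi_{RRR})$, with spectrum $\{1,3\}$, and $\langle\Theta\rangle_w=\sum_c n(c)\langle\Pi_c\rangle_w$ by Eq.~\eqref{eq:wv_bargmanns}.

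Then I would invoke the result of Ref.~\cite{WG23}, used already in the proof of \cref{thm:nonlogical}. If either $\rho_I$ or $\rho_F$ is incoherent with respect to $C$, the weak values $\{\langle\Pi_c\rangle_w\}_{c\in C}$ form a genuine probability distribution: they are real and nonnegative and sum to $1$. It follows that $\langle\Theta\rangle_w$ is a convex combination of the values $n(c)\in\{1,3\}$, so $\langle\Theta\rangle_w\in[1,3]$ and in particular $\mathrm{Re}(\langle\Theta\rangle_w)\geq 1$. Combined with the identity above, this yields Ineq.~\eqref{eq:pigeon-ineq}. Equivalently, $\langle\Theta\rangle_w = 1+2\big(\langle\Pi_{LLL}\rangle_w+\langle\Pi_{RRR}\rangle_w\big)$, so a paradox requires $\mathrm{Re}\big(\langle\Pi_{LLL}\rangle_w+\langle\Pi_{RRR}\rangle_w\big)<0$, which incoherence forbids.

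The one step needing care, though not a real obstacle, is that the \cite{WG23} result is applied in the full basis $C$ rather than in the coarser eigenbasis of each $\Pi^{\rm same}_{ij}$. This is legitimate: incoherence in $C$ makes $\rho_I$ (or $\rho_F$) commute with every $\Pi_c$. Then $\mathrm{Tr}[\rho_F\Pi_c\rho_I]=\mathrm{Tr}[\rho_F\Pi_c\rho_I\Pi_c]\geq 0$ (or the analogous expression with $\rho_F$), and positivity of the weak values follows directly, together with summing to $1$ via $\sum_c\Pi_c=\mathbb{1}_8$. The constraint~\eqref{eq:no_cheating_pigeonhole} plays no role in this inequality, just as Eq.~\eqref{eq:no_cheating_condition} was not used in \cref{thm:nonlogical}.
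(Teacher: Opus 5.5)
Your proposal is correct and follows essentially the same route as the paper. Both arguments diagonalize $\Theta$ in $C$ with spectrum $\{1,3\}$, then use Ref.~\cite{WG23} (incoherence of $\rho_I$ or $\rho_F$ in $C$ makes the $\langle\Pi_c\rangle_w$ a probability distribution, so $1\leq\langle\Theta\rangle_w\leq 3$), and finally combine Eq.~\eqref{eq:pigeon-sum-antic} with non-disturbance and the ABL rule; the only differences are cosmetic: you pass through the identity of \cref{coro:pigeon-anomalous} first, and you check the \cite{WG23} step directly via $\mathrm{Tr}[\rho_F\Pi_c\rho_I]=\mathrm{Tr}[\rho_F\Pi_c\rho_I\Pi_c]\geq 0$, where the paper simply cites it.
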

\begin{proof}
We first show that $\Theta$ has spectrum $\{1,3\} \subset \mathbbm{N}$. Since each $\Pi^{\rm same}_{ij}$ is diagonal in the product basis $C$, so is $\Theta$, and its eigenvalues are obtained by evaluating $\Theta$ on each basis string $\vert c\rangle = \vert c_1c_2c_3 \rangle$. The eigenvalue of $\Theta$ at $c$ equals the number of pairs $\{i,j\}$ for which $c_i = c_j$. As noted earlier, this count is $3$ for the two uniform strings ($LLL,RRR$, where all three pairs coincide) and $1$ for each of the remaining six, non-uniform strings (where exactly one pair coincides). Hence, the diagonalization of $\Theta$ relative to $C$ is given by  
\begin{equation}
\Theta=1\cdot\Pi_{\lambda=1}+3\cdot\Pi_{\lambda=3}, 
\end{equation}
with $\Pi_{\lambda=1}$ the rank-six projector onto the non-uniform strings and $\Pi_{\lambda=3}=\Pi_{LLL}+\Pi_{RRR}$. 

As in the proof of~\cref{thm:nonlogical}, using the results of Ref.~\cite{WG23}, if either $\rho_F$ or $\rho_I$ is incoherent with respect to $C$, then the weak values $\langle\Pi_c\rangle_w$ of the rank-one basis projectors $\Pi_c=\ketbra{c}{c}$, $c\in C$, form a valid probability distribution, and hence so does any coarse-graining of them, including $\langle\Pi^{\rm same}_{ij}\rangle_w$ for each pair and $\langle\Theta\rangle_w=\sum_{\{i,j\}}\langle\Pi^{\rm same}_{ij}\rangle_w$. Consequently $\langle\Theta\rangle_w$ lies within the eigenvalue range of $\Theta$,
\begin{equation}
1 \leq \langle\Theta\rangle_w \leq 3,
\end{equation}
which, used in Eq.~\eqref{eq:pigeon-sum-antic}, gives
\begin{equation}
\sum_{\{i,j\} \in P_3}p({\rm same}_{ij},{\rm pass}) \geq \mathrm{Tr}[\rho_F\rho_I] = p({\rm pass}\mid\rho_I,\mathcal{M}_F).
\end{equation}
Using the non-disturbance condition~\eqref{eq:pigeon-nondist} for each pair together with the ABL rule~\eqref{eq:pigeon-ABL}, exactly as in the proof of \cref{thm:nonlogical},
\begin{align}
&\sum_{\{i,j\}}p({\rm same}_{ij}\mid\rho_I,\rho_F,\mathcal{M}_{ij}) \stackrel{\eqref{eq:pigeon-ABL}}{=} \nonumber\\
&\sum_{\{i,j\}}\frac{p({\rm same}_{ij},{\rm pass})}{p({\rm same}_{ij},{\rm pass})+p({\rm diff}_{ij},{\rm pass})} \stackrel{\eqref{eq:pigeon-nondist}}{=} \nonumber\\
&\sum_{\{i,j\}}\frac{p({\rm same}_{ij},{\rm pass})}{p({\rm pass}\mid\rho_I,\mathcal{M}_F)} \geq \frac{p({\rm pass}\mid\rho_I,\mathcal{M}_F)}{p({\rm pass}\mid\rho_I,\mathcal{M}_F)} = 1.
\end{align}
The above shows that for a paradox to happen we must have that $\mathrm{Re}\left[\langle \Theta \rangle_w \right] < 1$, which is an anomalous weak value for that operator. Because any such value requires coherence~\cite{WG23} of both $\rho_I$ and $\rho_F$, this completes the proof.
\end{proof}

\begin{figure}[t]
\centering
\begin{tikzpicture}[>=Latex, font=\small]
\fill[red!12] (0,-0.14) rectangle (2,0.14);
\fill[blue!8] (2,-0.14) rectangle (6,0.14);
\node[font=\footnotesize] at (0.3,0.46) {Quantum (paradox)};
\node[font=\footnotesize] at (4,0.46) {Classical (no paradox)};
\draw[dashed] (2,-0.14)--(2,1);
\node[above, font=\footnotesize, align=center] at (2,0.98)
  {$\sum_{\{i,j\}}p({\rm same}_{ij}\mid\rho_I,\rho_F,\mathcal{M}_{ij})=1$};
\draw[->,thick] (-0.25,0)--(6.6,0);
\foreach \x/\s in {0/0, 2/1, 6/3}{
  \draw (\x,0.14)--(\x,-0.14);
  \node[font=\footnotesize] at (\x,-0.35) {$\s$};
}
\node[left, font=\footnotesize] at (-0.2,-0.35) {$\Sigma$:};
\fill (0,0) circle (1.8pt);
\node[font=\footnotesize] at (0,-0.94) {Optimal};
\end{tikzpicture}
\caption{\textbf{The pigeonhole paradox as a function of $\mathrm{Re}\big(\langle\Theta\rangle_w\big)$.} Under the operational non-disturbance condition~\eqref{eq:pigeon-nondist}, the pigeonhole statistics obey $\Sigma\equiv\sum_{\{i,j\}}p({\rm same}_{ij}\mid\rho_I,\rho_F,\mathcal{M}_{ij})=\mathrm{Re}\big(\langle\Theta\rangle_w\big)$ (\cref{coro:pigeon-anomalous}), where $\Theta=\sum_{\{i,j\}}\Pi^{\rm same}_{ij}$ has spectrum $\{1,3\}$. The classicality bound $\Sigma\geq1$, Ineq.~\eqref{eq:pigeon-ineq}, is violated in the shaded ``paradox'' region, where $\Theta$ has an anomalous weak value $\mathrm{Re}\big(\langle\Theta\rangle_w\big)<1$, falling below its minimum eigenvalue. The logical pigeonhole paradox of \cref{ex:pigeonhole} saturates the algebraic minimum $\Sigma=0$, i.e.\ $\langle\Theta\rangle_w=0$.}
\label{fig:pigeonbound}
\end{figure}

These results allow us to connect once more anomalous weak values of a certain observable (in this case, the operator $\Theta$) to the presence of a non-logical paradox in the scenario of the quantum pigeonhole principle. 

\begin{corollary}[Anomalous weak values of $\Theta$ quantify the non-logical pigeonhole paradox]
\label{coro:pigeon-anomalous}
For the pigeonhole scenario of Def.~\ref{defPigeon}, the sum of ABL statistics satisfies
\begin{equation}\label{eq:pigeon-sumweakvalue}
\sum_{\{i,j\}\in P_3}p({\rm same}_{ij}\mid\rho_I,\rho_F,\mathcal{M}_{ij}) = \mathrm{Re}\big(\langle\Theta\rangle_w\big).
\end{equation}
Consequently, a pigeonhole paradox occurs if and only if the weak value of $\Theta$ falls below its minimum eigenvalue,
\begin{equation}
\mathrm{Re}\big(\langle\Theta\rangle_w\big) < 1,
\end{equation}
that is, if and only if $\langle\Theta\rangle_w$ is anomalous.
\end{corollary}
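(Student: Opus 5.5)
The plan mirrors the proof of \cref{coro:anomalous}, with $\Pi_{N+1}$ replaced by $\Theta$. The only nontrivial ingredient is Eq.~\eqref{eq:pigeon-sum-antic}, already established above \cref{thm:pigeon-nonlogical}:
\begin{equation*}
\sum_{\{i,j\}\in P_3}\mathrm{Tr}[\rho_F\Pi^{\rm same}_{ij}\rho_I\Pi^{\rm same}_{ij}]=\mathrm{Tr}[\rho_F\rho_I]\,\mathrm{Re}\big(\langle\Theta\rangle_w\big).
\end{equation*}
This identity relies only on the non-disturbance condition~\eqref{eq:pigeon-nondist}. Through the anticommutator identity of Eqs.~\eqref{eq:quant_non_dist_alt}--\eqref{eq:quant_non_dist_alt4}, that condition gives $p({\rm same}_{ij},{\rm pass})=\mathrm{Re}(\mathrm{Tr}[\Pi^{\rm same}_{ij}\rho_F\rho_I])$ for each pair. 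Summing over pairs and using linearity of the trace and of $\mathrm{Re}$ then yields the identity.

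First, I divide both sides by $\mathrm{Tr}[\rho_F\rho_I]=p({\rm pass}\mid\rho_I,\mathcal{M}_F)$, which is nonzero by Def.~\ref{defPigeon}. Second, for each pair I use~\eqref{eq:pigeon-nondist} to replace this common denominator by $p({\rm same}_{ij},{\rm pass})+p({\rm diff}_{ij},{\rm pass})$. By the ABL rule~\eqref{eq:pigeon-ABL}, each summand on the left then becomes $p({\rm same}_{ij}\mid\rho_I,\rho_F,\mathcal{M}_{ij})$. This gives Eq.~\eqref{eq:pigeon-sumweakvalue}.

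The equivalence follows at once. Ineq.~\eqref{eq:pigeon-ineq} is violated precisely when the left-hand side is below $1$, that is, when $\mathrm{Re}(\langle\Theta\rangle_w)<1$. Since \cref{thm:pigeon-nonlogical} showed that $\Theta$ has spectrum $\{1,3\}$, the value $1$ is its minimum eigenvalue. A real part below $1$ therefore places $\langle\Theta\rangle_w$ outside $[\lambda_{\min},\lambda_{\max}]$, so the weak value is anomalous in the sense of Sec.~\ref{sec:coherence}.

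The main subtlety is the converse direction of the last claim. Anomaly of a complex weak value, as defined in Sec.~\ref{sec:coherence}, could also occur through a nonzero imaginary part or through $\mathrm{Re}(\langle\Theta\rangle_w)>3$. Neither of these signals a paradox. The ``if and only if'' must therefore be read, as in \cref{coro:anomalous}, as referring specifically to anomaly below the minimum eigenvalue, $\mathrm{Re}(\langle\Theta\rangle_w)<1$. I will state this explicitly so the corollary is not overread.
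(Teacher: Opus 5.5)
Your argument is correct and follows the paper's own (implicit) route: start from Eq.~\eqref{eq:pigeon-sum-antic}, divide by the nonzero $\mathrm{Tr}[\rho_F\rho_I]$, and use the non-disturbance condition~\eqref{eq:pigeon-nondist} with the ABL rule~\eqref{eq:pigeon-ABL} pair by pair, exactly as in the proof of \cref{coro:anomalous}. Your caveat on the final ``if and only if anomalous'' is justified. Within the scenario, $\mathrm{Re}(\langle\Theta\rangle_w)>3$ is actually impossible, since the real part equals a sum of three probabilities. A nonzero imaginary part is not excluded, however: $\ket{\psi_I}=(\ket{LLL}+\ket{LLR})/\sqrt{2}$ and $\ket{\psi_F}=(\ket{LLL}+i\ket{LLR})/\sqrt{2}$ satisfy every condition of Def.~\ref{defPigeon}, yet give $\langle\Theta\rangle_w=2+i$ with ABL sum $2$, so the weak value is anomalous without any paradox.
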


Interestingly, in this case, even positive anomalous weak values $0\leq \mathrm{Re}\big(\langle\Theta\rangle_w\big)<1$ are capable of yielding a paradox. We can now revisit the original pigeonhole state considered in Sec.~\ref{sec:pigeon} through the lens of \cref{coro:pigeon-anomalous}.

\begin{exa}[Weak value of $\Theta$ and the original pigeonhole paradox]
\label{ex:pigeonhole}
For $\ket{\psi_I}=\ket{+}^{\otimes3}$ and $\ket{\psi_F}=\ket{+_i}^{\otimes3}$, with $\ket{+}=(\ket{L}+\ket{R})/\sqrt{2}$, $\ket{+_i}=(\ket{L}+i\ket{R})/\sqrt{2}$, one has a post-selection with probability $\mathrm{Tr}(\rho_I\rho_F)=\vert\braket{\psi_F}{\psi_I}\vert^2=\sfrac{1}{8}$. Individually, each pairwise weak value vanishes~\cite{aharonov2016quantum}, so that, summing over the three pairs,
\begin{equation}
\langle \Theta\rangle_w = \sum_{\{i,j\}}\left\langle \Pi^{\rm same}_{ij}\right\rangle_w = 0.
\end{equation}
By \cref{coro:pigeon-anomalous}, $\mathrm{Re}(\langle\Theta\rangle_w)=0<1$, so $\langle\Theta\rangle_w$ is anomalous and thus yielding the optimal 
\begin{equation}
\sum_{\{i,j\}}p({\rm same}_{ij}\mid\rho_I,\rho_F,\mathcal{M}_{ij}) = \mathrm{Re}(\langle\Theta\rangle_w) = 0,
\end{equation}
paradox in the quantum pigeonhole scenario. 
\end{exa}

This quantifiable characteristic of a non-logical form of the quantum pigeonhole paradox provides an extension to mixed states of the results in Refs.~\cite{aharonov2016quantum,waegell2017confined,waegell2017contextuality,waegell2023negativemass,correa2021apparent}. Similarly as before, this provides a robustness criterion for observing a paradox even if the pre- and post-selection states are not identically equal to $\vert \psi_I \rangle$ or $\vert \psi_F \rangle$. 

\begin{exa}[A family of non-logical pigeonhole paradoxes]
\label{ex:pigeonhole-family}
We now construct an example of a paradox with mixed pre- and post-selected states. Let
\begin{align}
\rho_I &= p_I\,\ket{\psi_I}\bra{\psi_I} + (1-p_I)\,\ket{LLL}\bra{LLL},\\
\rho_F &= p_F\,\ket{\psi_F}\bra{\psi_F} + (1-p_F)\,\ket{LLL}\bra{LLL},
\end{align}
where $\ket{\psi_I},\ket{\psi_F}$ are taken to be the same as in \cref{ex:pigeonhole} and $0\leq p_I,p_F<1$. A direct calculation, using $\braket{LLL}{\psi_I}=\braket{LLL}{\psi_F}=\sfrac{1}{\sqrt{8}}$ and $\braket{\psi_F}{\Pi^{\rm same}_{ij}\psi_I}=0$ for every pair (\cref{ex:pigeonhole}), gives, for each pair $(i,j)$,
\begin{align}
p({\rm same}_{ij},{\rm pass}) &= 1 - \tfrac{7}{8}(p_I+p_F) + \tfrac{3}{4}p_Ip_F,\\
p({\rm diff}_{ij},{\rm pass}) &= \tfrac{1}{8}p_Ip_F,\\
p({\rm pass}\mid\rho_I,\mathcal{M}_F) &= 1 - \tfrac{7}{8}(p_I+p_F) + \tfrac{7}{8}p_Ip_F,
\end{align}
independent of which pair is chosen. Furthermore, this statistics satisfies the operational non-disturbance condition~\eqref{eq:pigeon-nondist} for all values of $p_I$ and $p_F$. One obtains the ABL probabilities as 
\begin{equation}
p({\rm same}_{ij}\mid\rho_I,\rho_F,\mathcal{M}_{ij}) = \frac{8-7(p_I+p_F)+6p_Ip_F}{8-7(p_I+p_F)+7p_Ip_F},
\end{equation}
the same for each of the three pairs. Consequently, via Eq.~\eqref{eq:pigeon-sumweakvalue} we find that the real part of the weak value $\mathrm{Re}[\langle\Theta\rangle_w]= \sum_{\{i,j\} \in P_3} p({\rm same}_{ij}\mid\rho_I,\rho_F,\mathcal{M}_{ij})$ is given by
\begin{equation}
\mathrm{Re}[\langle\Theta\rangle_w] = \frac{3\big[8-7(p_I+p_F)+6p_Ip_F\big]}{8-7(p_I+p_F)+7p_Ip_F}.
\end{equation}

On the other hand, computing $\langle\Theta\rangle_w=\mathrm{Tr}[\rho_F\Theta\rho_I]/\mathrm{Tr}[\rho_F\rho_I]$ directly from Eq.~\eqref{eq:wv}, yields independently the ABL statistics above, validating Eq.~\eqref{eq:pigeon-sumweakvalue}. Expanding $\rho_I,\rho_F$ into their four cross-terms, using $\langle\psi_F|\Theta|\psi_I\rangle=0$ (\cref{ex:pigeonhole}) and $\Theta\ket{LLL}=3\ket{LLL}$, gives $$\mathrm{Tr}[\rho_F\Theta\rho_I]=3\big[1-\tfrac78(p_I+p_F)+\tfrac34p_Ip_F\big]$$ and $$\mathrm{Tr}[\rho_F\rho_I]=1-\tfrac78(p_I+p_F)+\tfrac78p_Ip_F,$$ recovering the same expression for $\langle\Theta\rangle_w$ exactly.

The paradox appears when $\mathrm{Re}(\langle\Theta\rangle_w)<1$. In this case, the region of $p_I$ and $p_F$ satisfying this criterion is given by
\begin{equation}
14(p_I+p_F) - 11\,p_Ip_F > 16.
\end{equation}
For all such values one has a non-logical instance of a pigeonhole paradox. At $p_I=p_F=1$ this recovers the logical pigeonhole paradox of \cref{ex:pigeonhole} ($\langle\Theta\rangle_w=0$), while along the diagonal $p_I=p_F=p$ the paradox persists down to $p=\tfrac{1}{11}(14-2\sqrt5)\approx0.866$, below which the statistics become classical.
\end{exa}

It is also worth mentioning that even a real, nonnegative anomalous weak value of $\Theta$ still implies coherence, as captured by Bargmann invariants. This is because, in such a case, at least one of the Bargmann invariants constructed out of some projector $\Pi_{c_1c_2c_3}$---with $c_i \in \{L,R\}$---together with $\rho_I$ and $\rho_F$ must have a negative real part or nonzero imaginary part~\cite{WG23}. For example, in the original pigeonhole paradox we find, writing $n_R(c)$ for the number of $R$'s in the string $c=c_1c_2c_3$,
\begin{equation}
\langle c \vert \psi_I \rangle  = \frac{1}{2\sqrt{2}}, \qquad \braket{\psi_F}{c} = \frac{(-i)^{n_R(c)}}{2\sqrt{2}},
\end{equation}
and $\langle \psi_I \vert \psi_F \rangle =(-1+i)/4$, so that the third-order Bargmann invariant associated with each basis projector $\Pi_c=\vert c \rangle \langle c \vert$ is
\begin{align}
\Delta(\rho_F,\Pi_c,\rho_I) &= \braket{\psi_F}{c}\braket{c}{\psi_I}\braket{\psi_I}{\psi_F} \nonumber \\
&= \frac{(-i)^{n_R(c)}(-1+i)}{32}.
\end{align}
Evaluating this for each of the four possible values of $n_R(c)$ gives
\begin{equation}
\mathrm{Re}\big(\Delta(\rho_F,\Pi_c,\rho_I)\big) =
\begin{cases}
-\tfrac{1}{32}, & n_R(c)=0 \text{ or } 3,\\[2pt]
+\tfrac{1}{32}, & n_R(c)=1 \text{ or } 2.
\end{cases}
\end{equation}
Thus, exactly the two uniform strings $c=LLL$ and $c=RRR$---the strings with $\Theta$-eigenvalue $3$, i.e.\ the two configurations for which \emph{all three} pairs coincide---carry a negative Bargmann invariant, while all six non-uniform strings (each contributing $\Theta$-eigenvalue $1$) carry a positive one. 

\section{Conclusions}\label{sec:conclusions}

We have shown that quantum coherence is a necessary resource for logical and non-logical $N$-box and quantum pigeonhole PPS paradoxes, and that its role is not merely qualitative but exactly quantifiable. For the $N$-box scenario, \cref{thm:nonlogical} shows that a coherence-free subtheory---one in which either the pre- or the post-selected state is incoherent with respect to the box basis---cannot violate the classicality bound~\eqref{eq:probNbox}, while \cref{coro:anomalous} identifies the precise quantity governing any such violation: the weak value of the excluded box, $\langle\Pi_{N+1}\rangle_w$. A paradox can occur if and only if this weak value is anomalous.

As we elaborate in Sec.~\ref{sec:LG}, our results align with Maroney's identification of the three-box paradox as a violation of a Leggett--Garg inequality~\cite{MARONEY2}. Building on Schmid's reformulation of macrorealism as strict classicality~\cite{schmid2024reviewreformulation}, we note that, within quantum theory, a strictly classical fragment corresponds precisely to an incoherent subtheory relative to a fixed basis~\cite{selby2023accessible,schmid2024reviewreformulation}. Under this correspondence,~\cref{thm:nonlogical} is consistent with Maroney's conclusions, since coherence is necessary for violating a Leggett--Garg inequality.

We established the analogous pair of results for the quantum pigeonhole principle in \cref{thm:pigeon-nonlogical,coro:pigeon-anomalous}, where the governing quantity is instead the weak value of the single operator $\Theta=\sum_{\{i,j\}}\Pi^{\rm same}_{ij}$, and a paradox occurs precisely when $\langle\Theta\rangle_w$ falls below its minimum eigenvalue. In both cases, the underlying mechanism is the same: non-disturbance, together with the anticommutator identity of Eq.~\eqref{eq:quant_non_dist_alt}, converts the relevant ABL statistics into the real part of a weak value, whose anomaly is required for the paradox and indicates the relevance of coherence.

Since logical PPS paradoxes are already known to be proofs of contextuality~\cite{LeiferspekkensPPScontextuality,PuseyLeifer2015} in its standard form~\cite{kochen1967problem,budroni2022kochenspecker}, a natural next step is to determine whether the non-logical $N$-box and pigeonhole paradoxes studied here are themselves proofs of generalized contextuality~\cite{spekkens2005contextuality}. In particular, it remains an open question whether classical (though \emph{not} strictly classical~\cite{schmid2024reviewreformulation}) toy models can reproduce the predictions of PPS scenarios such as the ones considered here. A worthwhile direction for future work is to clarify whether such models can reproduce the operational predictions of these PPS scenarios (and others, such as the quantum Cheshire cat~\cite{aharonov2013cheshire}) in some regimes, or, alternatively, whether this is impossible because the inequalities considered here, together with the operational descriptions characterizing the associated PPS scenarios, can themselves be shown to be generalized noncontextuality inequalities.

\begin{acknowledgments}
EFG acknowledges support from FCT – Fundaç\~{a}o para a Ciência e a Tecnologia (Portugal) via project CEECINST/00062/2018 and from the National Council for
Scientific and Technological Development – CNPq (Brazil) under grant 308292/2025-1. RW acknowledges support from the Alexander von Humboldt Foundation. SK acknowledges funding from the Digital Horizon Europe project \href{https://cordis.europa.eu/project/id/101070558}{FoQaCiA} (\textit{Foundations of Quantum Computational Advantage}), GA no.101070558, funded by the European Union and NSERC (Canada). 
\end{acknowledgments}

\bibliography{bibliography}

@PREAMBLE{
 "\providecommand{\noopsort}[1]{}" 
 # "\providecommand{\singleletter}[1]{#1}%" 
}

@article{budroni2022kochenspecker, title={{Kochen-Specker contextuality}}, volume={94}, ISSN={1539-0756}, url={http://dx.doi.org/10.1103/RevModPhys.94.045007}, DOI={10.1103/revmodphys.94.045007}, number={4}, journal={Rev. Mod. Phys.}, publisher={American Physical Society (APS)}, author={Budroni, Costantino and Cabello, Adán and Gühne, Otfried and Kleinmann, Matthias and Larsson, Jan-Ake}, year={2022}, month=Dec,pages={045007} }

@article{baumgratz2014quantifying,
  title = {{Quantifying Coherence}},
  author = {Baumgratz, T. and Cramer, M. and Plenio, M. B.},
  journal = {Phys. Rev. Lett.},
  volume = {113},
  issue = {14},
  pages = {140401},
  numpages = {5},
  year = {2014},
  month = {Sep},
  publisher = {American Physical Society},
  doi = {10.1103/PhysRevLett.113.140401},
  url = {https://link.aps.org/doi/10.1103/PhysRevLett.113.140401}
}

@article{emary2013leggett, title={{Leggett--Garg inequalities}}, volume={77}, ISSN={1361-6633}, url={http://dx.doi.org/10.1088/0034-4885/77/1/016001}, DOI={10.1088/0034-4885/77/1/016001}, number={1}, journal={Rep. Prog. Phys.}, publisher={IOP Publishing}, author={Emary, Clive and Lambert, Neill and Nori, Franco}, year={2013}, month=Dec, pages={016001} }

@article{schmid2024reviewreformulation,
  doi = {10.22331/q-2024-01-03-1217},
  url = {https://doi.org/10.22331/q-2024-01-03-1217},
  title = {A review and reformulation of macroscopic realism: resolving its deficiencies using the framework of generalized probabilistic theories},
  author = {Schmid, David},
  journal = {{Quantum}},
  issn = {2521-327X},
  publisher = {{Verein zur F{\"{o}}rderung des Open Access Publizierens in den Quantenwissenschaften}},
  volume = {8},
  pages = {1217},
  month = jan,
  year = {2024}
}

@article{vitagliano2023leggett,
  title = {{Leggett--Garg macrorealism and temporal correlations}},
  author = {Vitagliano, Giuseppe and Budroni, Costantino},
  journal = {Phys. Rev. A},
  volume = {107},
  issue = {4},
  pages = {040101},
  numpages = {22},
  year = {2023},
  month = {Apr},
  publisher = {American Physical Society},
  doi = {10.1103/PhysRevA.107.040101},
  url = {https://link.aps.org/doi/10.1103/PhysRevA.107.040101}
}

@article{leggett1985quantum,
  title = {Quantum mechanics versus macroscopic realism: Is the flux there when nobody looks?},
  author = {Leggett, A. J. and Garg, Anupam},
  journal = {Phys. Rev. Lett.},
  volume = {54},
  issue = {9},
  pages = {857--860},
  numpages = {0},
  year = {1985},
  month = {Mar},
  publisher = {American Physical Society},
  doi = {10.1103/PhysRevLett.54.857},
  url = {https://link.aps.org/doi/10.1103/PhysRevLett.54.857}
}

@article{kastner2003nature, title={{The Nature of the Controversy over Time-Symmetric Quantum Counterfactuals}}, volume={70}, ISSN={1539-767X}, url={http://dx.doi.org/10.1086/367874}, DOI={10.1086/367874}, number={1}, journal={Philos. Sci.}, publisher={Cambridge University Press (CUP)}, author={Kastner, Ruth E.}, year={2003}, month=Jan, pages={145–163} }

@article{tollaksen2007pre, title={Pre- and post-selection, weak values and contextuality}, volume={40}, ISSN={1751-8121}, url={http://dx.doi.org/10.1088/1751-8113/40/30/025}, DOI={10.1088/1751-8113/40/30/025}, number={30}, journal={ J. Phys. A: Math. Theor.}, publisher={IOP Publishing}, author={Tollaksen, Jeff}, year={2007}, month=July, pages={9033–9066} }

@incollection{aharonov2002time,
  author    = {Aharonov, Yakir and Vaidman, Lev},
  title     = {The Two-State Vector Formalism of Quantum Mechanics},
  booktitle = {Time in Quantum Mechanics},
  editor    = {Muga, J. G. and Sala Mayato, R. and Egusquiza, I. L.},
  series    = {Lecture Notes in Physics},
  volume    = {72},
  pages     = {369--412},
  publisher = {Springer},
  address   = {New York},
  year      = {2002},
  doi       = {10.1007/3-540-45846-8_13},
  url       = {https://doi.org/10.1007/3-540-45846-8_13}
}

@article{WG23,
   title={Simple proof that anomalous weak values require coherence},
   volume={108},
   ISSN={2469-9934},
   url={http://dx.doi.org/10.1103/PhysRevA.108.L040202},
   DOI={10.1103/physreva.108.l040202},
   number={4},
   journal={Phys. Rev. A},
   publisher={American Physical Society (APS)},
   author={Wagner, Rafael and Galvão, Ernesto F.},
   year={2023},
   month=oct, pages={L040202 } }

@article{albert1985curious,
  title = {{Curious New Statistical Prediction of Quantum Mechanics}},
  author = {Albert, David Z. and Aharonov, Yakir and D'Amato, Susan},
  journal = {Phys. Rev. Lett.},
  volume = {54},
  issue = {1},
  pages = {5--7},
  numpages = {0},
  year = {1985},
  month = {Jan},
  publisher = {American Physical Society},
  doi = {10.1103/PhysRevLett.54.5},
  url = {https://link.aps.org/doi/10.1103/PhysRevLett.54.5}
}

@article{waegell2017confined,
  title = {Confined contextuality in neutron interferometry: Observing the quantum pigeonhole effect},
  author = {Waegell, Mordecai and Denkmayr, Tobias and Geppert, Hermann and Ebner, David and Jenke, Tobias and Hasegawa, Yuji and Sponar, Stephan and Dressel, Justin and Tollaksen, Jeff},
  journal = {Phys. Rev. A},
  volume = {96},
  issue = {5},
  pages = {052131},
  numpages = {8},
  year = {2017},
  month = {Nov},
  publisher = {American Physical Society},
  doi = {10.1103/PhysRevA.96.052131},
  url = {https://link.aps.org/doi/10.1103/PhysRevA.96.052131}
}

@article{waegell2017contextuality, title={{Contextuality, Pigeonholes, Cheshire Cats, Mean Kings, and Weak Values}}, volume={5}, ISSN={2196-5617}, url={http://dx.doi.org/10.1007/s40509-017-0127-9}, DOI={10.1007/s40509-017-0127-9}, number={2}, journal={Quantum Stud.: Math. Found.}, publisher={Springer Science and Business Media LLC}, author={Waegell, Mordecai and Tollaksen, Jeff}, year={2017}, month=Sept, pages={325–349} }

@book{aharonov2005quantum,
  title     = {{Quantum Paradoxes: Quantum Theory for the Perplexed}},
  author    = {Aharonov, Yakir and Rohrlich, Daniel},
  publisher = {Wiley-VCH},
  address   = {Weinheim},
  year      = {2005},
  isbn      = {9783527405683},
  doi={10.1002/9783527619115}
}

@article{bub1986curious,
  title = {{Curious Properties of Quantum Ensembles Which Have Been Both Preselected and Post-Selected}},
  author = {Bub, Jeffrey and Brown, Harvey},
  journal = {Phys. Rev. Lett.},
  volume = {56},
  issue = {22},
  pages = {2337--2340},
  numpages = {0},
  year = {1986},
  month = {Jun},
  publisher = {American Physical Society},
  doi = {10.1103/PhysRevLett.56.2337},
  url = {https://link.aps.org/doi/10.1103/PhysRevLett.56.2337}
}

@article{albert1986comment,
  title = {{Comment on ``Curious Properties of Quantum Ensembles Which Have Been Both Preselected and Post-Selected'''}},
  author = {Albert, David Z. and Aharonov, Yakir and D'Amato, Susan},
  journal = {Phys. Rev. Lett.},
  volume = {56},
  issue = {22},
  pages = {2427},
  numpages = {0},
  year = {1986},
  month = {Jun},
  publisher = {American Physical Society},
  doi = {10.1103/PhysRevLett.56.2427},
  url = {https://link.aps.org/doi/10.1103/PhysRevLett.56.2427}
}

@article{leavens2006general, title={{General $N$-box problem}}, volume={359}, ISSN={0375-9601}, url={http://dx.doi.org/10.1016/j.physleta.2006.06.089}, DOI={10.1016/j.physleta.2006.06.089}, number={5}, journal={Phys. Lett. A}, publisher={Elsevier BV}, author={Leavens, C.R. and Puerto Gimenez, I. and Alonso, D. and Sala Mayato, R.}, year={2006}, month=Dec, pages={416–423} }

@article{aharonov1991complete, title={Complete description of a quantum system at a given time}, volume={24}, ISSN={1361-6447}, url={http://dx.doi.org/10.1088/0305-4470/24/10/018}, DOI={10.1088/0305-4470/24/10/018}, number={10}, journal={J. Phys. A: Math. Gen. }, publisher={IOP Publishing}, author={Aharonov, Y and Vaidman, L}, year={1991}, month=May, pages={2315–2328} }

@article{hance2023contextuality, title={{Contextuality, coherences, and quantum Cheshire cats}}, volume={25}, ISSN={1367-2630}, url={http://dx.doi.org/10.1088/1367-2630/ad0bd4}, DOI={10.1088/1367-2630/ad0bd4}, number={11}, journal={New J. Phys.}, publisher={IOP Publishing}, author={Hance, Jonte R and Ji, Ming and Hofmann, Holger F}, year={2023}, month=Nov, pages={113028} }

@article{hance2024dynamicalcheshire, title={{Is the dynamical quantum Cheshire cat detectable?}}, volume={26}, ISSN={1367-2630}, url={http://dx.doi.org/10.1088/1367-2630/ad6476}, DOI={10.1088/1367-2630/ad6476}, number={7}, journal={New J. Phys.}, publisher={IOP Publishing}, author={Hance, Jonte R and Ladyman, James and Rarity, John}, year={2024}, month=July, pages={073038} }

@article{aharonov2013cheshire, title={{Quantum Cheshire Cats}}, volume={15}, ISSN={1367-2630}, url={http://dx.doi.org/10.1088/1367-2630/15/11/113015}, DOI={10.1088/1367-2630/15/11/113015}, number={11}, journal={New J. Phys.}, publisher={IOP Publishing}, author={Aharonov, Yakir and Popescu, Sandu and Rohrlich, Daniel and Skrzypczyk, Paul}, year={2013}, month=Nov, pages={113015} }

@article{aharonov2021dynamical, title={{A dynamical quantum Cheshire Cat effect and implications for counterfactual communication}}, volume={12}, ISSN={2041-1723}, url={http://dx.doi.org/10.1038/s41467-021-24933-9}, DOI={10.1038/s41467-021-24933-9}, number={1}, journal={Nat. Commun.}, publisher={Springer Science and Business Media LLC}, author={Aharonov, Yakir and Cohen, Eliahu and Popescu, Sandu}, year={2021}, month=Aug,pages={4770} }

@article{denkmayr2014observation, title={{Observation of a quantum Cheshire Cat in a matter-wave interferometer experiment}}, volume={5}, ISSN={2041-1723}, url={http://dx.doi.org/10.1038/ncomms5492}, DOI={10.1038/ncomms5492}, number={1}, journal={Nat. Commun.}, publisher={Springer Science and Business Media LLC}, author={Denkmayr, Tobias and Geppert, Hermann and Sponar, Stephan and Lemmel, Hartmut and Matzkin, Alexandre and Tollaksen, Jeff and Hasegawa, Yuji}, year={2014}, month=July,pages={4492} }

@article{hardy1992quantum,
  title = {{Quantum mechanics, local realistic theories, and Lorentz-invariant realistic theories}},
  author = {Hardy, Lucien},
  journal = {Phys. Rev. Lett.},
  volume = {68},
  issue = {20},
  pages = {2981--2984},
  numpages = {0},
  year = {1992},
  month = {May},
  publisher = {American Physical Society},
  doi = {10.1103/PhysRevLett.68.2981},
  url = {https://link.aps.org/doi/10.1103/PhysRevLett.68.2981}
}

@article{abramsky2012logical,
  title = {{Logical Bell inequalities}},
  author = {Abramsky, Samson and Hardy, Lucien},
  journal = {Phys. Rev. A},
  volume = {85},
  issue = {6},
  pages = {062114},
  numpages = {11},
  year = {2012},
  month = {Jun},
  publisher = {American Physical Society},
  doi = {10.1103/PhysRevA.85.062114},
  url = {https://link.aps.org/doi/10.1103/PhysRevA.85.062114}
}

@article{LeiferspekkensPPScontextuality,
   title={{Pre- and Post-Selection Paradoxes and Contextuality in Quantum Mechanics}},
   volume={95},
   ISSN={1079-7114},
   url={http://dx.doi.org/10.1103/PhysRevLett.95.200405},
   DOI={10.1103/physrevlett.95.200405},
   number={20},
   journal={Phys. Rev. Lett.},
   publisher={American Physical Society (APS)},
   author={Leifer, M. S. and Spekkens, Robert W.},
   year={2005},
   month=nov, pages={200405} }

@inbook{aharonov2016weakvalues, title={{Weak Values and Quantum Nonlocality}}, url={http://dx.doi.org/10.1017/CBO9781316219393.020}, DOI={10.1017/cbo9781316219393.020}, booktitle={Quantum Nonlocality and Reality}, publisher={Cambridge University Press}, author={Aharonov, Yakir and Cohen, Eliahu}, year={2016}, month=Aug, pages={305–314} }

@article{hofman2015quantum,
  title = {Quantum paradoxes originating from the nonclassical statistics of physical properties related to each other by half-periodic transformations},
  author = {Hofmann, Holger F.},
  journal = {Phys. Rev. A},
  volume = {91},
  issue = {6},
  pages = {062123},
  numpages = {11},
  year = {2015},
  month = {Jun},
  publisher = {American Physical Society},
  doi = {10.1103/PhysRevA.91.062123},
  url = {https://link.aps.org/doi/10.1103/PhysRevA.91.062123}
}

@article{hofmann2011ontherole, title={On the role of complex phases in the quantum statistics of weak measurements}, volume={13}, ISSN={1367-2630}, url={http://dx.doi.org/10.1088/1367-2630/13/10/103009}, DOI={10.1088/1367-2630/13/10/103009}, number={10}, journal={New J. Phys.}, publisher={IOP Publishing}, author={Hofmann, Holger F}, year={2011}, month=Oct, pages={103009} }

@misc{hofmann2009resolutionquantumparadoxesweak,
      title={On the resolution of quantum paradoxes by weak measurements}, 
      author={Holger F. Hofmann},
      year={2009},
      eprint={0911.0071},
      archivePrefix={arXiv},
      primaryClass={quant-ph},
      url={https://arxiv.org/abs/0911.0071}, 
}

@article{aharonov2002revisiting, title={{Revisiting Hardy’s paradox: counterfactual statements, real measurements, entanglement and weak values}}, volume={301}, ISSN={0375-9601}, url={http://dx.doi.org/10.1016/S0375-9601(02)00986-6}, DOI={10.1016/s0375-9601(02)00986-6}, number={3-4}, journal={Phys. Lett. A}, publisher={Elsevier BV}, author={Aharonov, Yakir and Botero, Alonso and Popescu, Sandu and Reznik, Benni and Tollaksen, Jeff}, year={2002}, month=Aug, pages={130–138} }

@article{LeiferSpekkensontology,
   title={Logical Pre- and Post-Selection Paradoxes, Measurement-Disturbance and Contextuality},
   volume={44},
   ISSN={1572-9575},
   url={http://dx.doi.org/10.1007/s10773-005-8975-1},
   DOI={10.1007/s10773-005-8975-1},
   number={11},
   journal={Int. J. Theor. Phys.},
   publisher={Springer Science and Business Media LLC},
   author={Leifer, M S and Spekkens, R W},
   year={2005},
   month=nov, pages={1977–1987} }

@article{aharonov2016quantum, title={Quantum violation of the pigeonhole principle and the nature of quantum correlations}, volume={113}, ISSN={1091-6490}, url={http://dx.doi.org/10.1073/pnas.1522411112}, DOI={10.1073/pnas.1522411112}, number={3}, journal={Proc. Natl. Acad. Sci. U.S.A. }, publisher={National Academy of Sciences}, author={Aharonov, Yakir and Colombo, Fabrizio and Popescu, Sandu and Sabadini, Irene and Struppa, Daniele C. and Tollaksen, Jeff}, year={2016}, month=Jan, pages={532–535} }

@article{PuseyLeifer2015,
   title={Logical pre- and post-selection paradoxes are proofs of contextuality},
   volume={195},
   ISSN={2075-2180},
   url={http://dx.doi.org/10.4204/EPTCS.195.22},
   DOI={10.4204/eptcs.195.22},
   journal={Electronic Proceedings in Theoretical Computer Science},
   publisher={Open Publishing Association},
   author={Pusey, Matthew F. and Leifer, Matthew S.},
   year={2015},
   month=nov, pages={295–306} }

@article{
Maroney1,
author = {Richard E. George  and Lucio M. Robledo  and Owen J. E. Maroney  and Machiel S. Blok  and Hannes Bernien  and Matthew L. Markham  and Daniel J. Twitchen  and John J. L. Morton  and G. Andrew D. Briggs  and Ronald Hanson },
title = {Opening up three quantum boxes causes classically undetectable wavefunction collapse},
journal = {Proc. Natl. Acad. Sci. U.S.A. },
volume = {110},
number = {10},
pages = {3777-3781},
year = {2013},
doi = {10.1073/pnas.1208374110}}

@article{MARONEY2,
title = {Measurements, disturbances and the quantum three box paradox},
journal = {Stud. Hist. Philos. Sci. Part B: Stud. Hist. Philos. of Mod. Phys.},
volume = {58},
pages = {41-53},
year = {2017},
issn = {1355-2198},
doi = {https://doi.org/10.1016/j.shpsb.2016.12.003},
url = {https://www.sciencedirect.com/science/article/pii/S1355219815300113},
author = {O.J.E. Maroney}
}

@article{aharonov1964time,
  title = {{Time Symmetry in the Quantum Process of Measurement}},
  author = {Aharonov, Yakir and Bergmann, Peter G. and Lebowitz, Joel L.},
  journal = {Phys. Rev.},
  volume = {134},
  issue = {6B},
  pages = {B1410--B1416},
  numpages = {0},
  year = {1964},
  month = {Jun},
  publisher = {American Physical Society},
  doi = {10.1103/PhysRev.134.B1410},
  url = {https://link.aps.org/doi/10.1103/PhysRev.134.B1410}
}

@article{santos2021conditions,
  title = {Conditions for logical contextuality and nonlocality},
  author = {Santos, Leonardo and Amaral, Barbara},
  journal = {Phys. Rev. A},
  volume = {104},
  issue = {2},
  pages = {022201},
  numpages = {13},
  year = {2021},
  month = {Aug},
  publisher = {American Physical Society},
  doi = {10.1103/PhysRevA.104.022201},
  url = {https://link.aps.org/doi/10.1103/PhysRevA.104.022201}
}

@article{pusey2014anomalous,
  title={Anomalous weak values are proofs of contextuality},
  author={Pusey, Matthew F.},
  fjournal={Phys. Rev. Lett.},
  journal={Phys. Rev. Lett.},
  volume={113},
  number={20},
  pages={200401},
  year={2014},
  doi={10.1103/PhysRevLett.113.200401},
  publisher={APS}
}

@article{kunjwal2019anomalous,
  title={Anomalous weak values and contextuality: robustness, tightness, and imaginary parts},
  author={Kunjwal, Ravi and Lostaglio, Matteo and Pusey, Matthew F.},
  fjournal={Phys. Rev. A},
  journal={Phys. Rev. A},
  volume={100},
  number={4},
  pages={042116},
  year={2019},
  doi={10.1103/PhysRevA.100.042116},
  publisher={APS}
}

@article{bell1966problem,
  title={On the problem of hidden variables in quantum mechanics},
  author={Bell, John S},
  fjournal={Rev. Mod. Phys.},
  journal={Rev. Mod. Phys.},
  volume={38},
  number={3},
  pages={447},
  year={1966},
  doi={10.1103/RevModPhys.38.447},
  publisher={APS}
}

@article{kochen1967problem,
  title={The problem of hidden variables in quantum mechanics},
  author={Kochen, Simon and Specker, Ernst P},
  fjournal={Journal of Mathematics and Mechanics},
  journal={J. Math. Mech.},
  volume={17},
  number={1},
  pages={59},
  year={1967},
  url={https://www.jstor.org/stable/24902153},
  publisher={JSTOR}
}

@article{spekkens2005contextuality,
  title={Contextuality for preparations, transformations, and unsharp measurements},
  author={Spekkens, Robert W},
  fjournal={Phys. Rev. A},
  journal={Phys. Rev. A},
  volume={71},
  number={5},
  pages={052108},
  year={2005},
  doi={10.1103/PhysRevA.71.052108},
  publisher={APS}
}

@article{aharonov1988result,
  title={How the result of a measurement of a component of the spin of a spin-1/2 particle can turn out to be 100},
  author={Aharonov, Yakir and Albert, David Z. and Vaidman, Lev},
  fjournal={Phys. Rev. Lett.},
  journal={Phys. Rev. Lett.},
  volume={60},
  number={14},
  pages={1351},
  year={1988},
  doi={10.1103/PhysRevLett.60.1351},
  publisher={APS}
}

@article{dressel2014colloquium,
  title={Colloquium: {U}nderstanding quantum weak values: {B}asics and applications},
  author={Dressel, Justin and Malik, Mehul and Miatto, Filippo M. and Jordan, Andrew N. and Boyd, Robert W.},
  fjournal={Rev. Mod. Phys.},
  journal={Rev. Mod. Phys.},
  volume={86},
  number={1},
  pages={307},
  year={2014},
  doi={10.1103/RevModPhys.86.307},
  publisher={APS}
}

@article{dressel2015weak,
  title={Weak values as interference phenomena},
  author={Dressel, Justin},
  fjournal={Phys. Rev. A},
  journal={Phys. Rev. A},
  volume={91},
  number={3},
  pages={032116},
  year={2015},
  doi={10.1103/PhysRevA.91.032116},
  publisher={APS}
}

@article{streltsov2017colloquium,
  title={Colloquium: Quantum coherence as a resource},
  author={Streltsov, Alexander and Adesso, Gerardo and Plenio, Martin B.},
  fjournal={Rev. Mod. Phys.},
  journal={Rev. Mod. Phys.},
  volume={89},
  number={4},
  pages={041003},
  year={2017},
  doi={10.1103/RevModPhys.89.041003},
  publisher={APS}
}

@article{designolle2021set,
  title = {Set Coherence: {B}asis-Independent Quantification of Quantum Coherence},
  author = {Designolle, S\'ebastien and Uola, Roope and Luoma, Kimmo and Brunner, Nicolas},
  journal = {Phys. Rev. Lett.},
  volume = {126},
  issue = {22},
  pages = {220404},
  numpages = {6},
  year = {2021},
  month = {Jun},
  publisher = {American Physical Society},
  doi = {10.1103/PhysRevLett.126.220404},
  url = {https://link.aps.org/doi/10.1103/PhysRevLett.126.220404}
}

@article{bargmann1964note,
  title={Note on {W}igner's theorem on symmetry operations},
  author={Bargmann, Valentine},
  fjournal={Journal of Mathematical Physics},
  journal={J. Math. Phys.},
  volume={5},
  number={7},
  pages={862},
  year={1964},
  doi={10.1063/1.1704188},
  publisher={American Institute of Physics}
}

@article{giordani2021witnesses,
  title = {Witnesses of coherence and dimension from multiphoton indistinguishability tests},
  author = {Giordani, Taira and Esposito, Chiara and Hoch, Francesco and Carvacho, Gonzalo and Brod, Daniel J. and Galv\~ao, Ernesto F. and Spagnolo, Nicol\`o and Sciarrino, Fabio},
  fjournal = {Physical Review Research},
  journal = {Phys. Rev. Res.},
  volume = {3},
  issue = {2},
  pages = {023031},
  numpages = {10},
  year = {2021},
  month = {Apr},
  publisher = {American Physical Society},
  doi = {10.1103/PhysRevResearch.3.023031},
  url = {https://link.aps.org/doi/10.1103/PhysRevResearch.3.023031}
}

@article{galvao2020quantum,
  title = {Quantum and classical bounds for two-state overlaps},
  author = {Galv\~ao, Ernesto F. and Brod, Daniel J.},
  journal = {Phys. Rev. A},
  volume = {101},
  issue = {6},
  pages = {062110},
  numpages = {12},
  year = {2020},
  month = {Jun},
  publisher = {American Physical Society},
  doi = {10.1103/PhysRevA.101.062110},
  url = {https://link.aps.org/doi/10.1103/PhysRevA.101.062110}
}

@article{oszmaniec2024measuring, title={Measuring relational information between quantum states, and applications}, volume={26}, ISSN={1367-2630}, url={http://dx.doi.org/10.1088/1367-2630/ad1a27}, DOI={10.1088/1367-2630/ad1a27}, number={1}, journal={New J. Phys.}, publisher={IOP Publishing}, author={Oszmaniec, Michał and Brod, Daniel J and Galvão, Ernesto F}, year={2024}, month=Jan, pages={013053} }

@article{aharonov1990properties,
  title = {Properties of a quantum system during the time interval between two measurements},
  author = {Aharonov, Yakir and Vaidman, Lev},
  journal = {Phys. Rev. A},
  volume = {41},
  issue = {1},
  pages = {11--20},
  numpages = {0},
  year = {1990},
  month = {Jan},
  publisher = {American Physical Society},
  doi = {10.1103/PhysRevA.41.11},
  url = {https://link.aps.org/doi/10.1103/PhysRevA.41.11}
}

@article{degosson2012weak,
	doi = {10.1016/j.physleta.2011.11.007},
	url = {https://doi.org/10.1016/j.physleta.2011.11.007},
	year = 2012,
	month = {jan},
	publisher = {Elsevier {BV}},
	volume = {376},
	number = {4},
	pages = {293--296},
	author = {Maurice A. de Gosson and Serge M. de Gosson},
	title = {{Weak values of a quantum observable and the cross-Wigner distribution}},
	journal = {Phys. Lett. A}
}

@incollection{wigderson2019mathematics,
  title={Mathematics and computation},
  author={Wigderson, Avi},
  booktitle={Mathematics and Computation},
  year={2019},
  publisher={Princeton University Press},
  address={Princeton, NJ},
  url={https://press.princeton.edu/books/hardcover/9780691189130/mathematics-and-computation}
}

@article{higgins2015using,
	doi = {10.1103/physreva.91.012113},
	url = {https://doi.org/10.1103/physreva.91.012113},
	year = 2015,
	month = {jan},
	publisher = {American Physical Society ({APS})},
	volume = {91},
  pages = {012113},
  numpages = {8},
	number = {1},
	author = {B. L. Higgins and M. S. Palsson and G. Y. Xiang and H. M. Wiseman and G. J. Pryde},
	title = {Using weak values to experimentally determine {\textquotedblleft}negative probabilities{\textquotedblright} in a two-photon state with {B}ell correlations},
	journal = {Phys. Rev. A}
}

@article{lund2010measuring,
	doi = {10.1088/1367-2630/12/9/093011},
	url = {https://doi.org/10.1088/1367-2630/12/9/093011},
	year = 2010,
	month = {sep},
	publisher = {{IOP} Publishing},
	volume = {12},
	number = {9},
	pages = {093011},
	author = {A P Lund and H M Wiseman},
	title = {Measuring measurement{\textendash}disturbance relationships with weak values},
	journal = {New J. Phys.}
}

@article{bartlet2012reconstruction,
  title = {Reconstruction of Gaussian quantum mechanics from Liouville mechanics with an epistemic restriction},
  author = {Bartlett, Stephen D. and Rudolph, Terry and Spekkens, Robert W.},
  journal = {Phys. Rev. A},
  volume = {86},
  issue = {1},
  pages = {012103},
  numpages = {25},
  year = {2012},
  month = {Jul},
  publisher = {American Physical Society},
  doi = {10.1103/PhysRevA.86.012103},
  url = {https://link.aps.org/doi/10.1103/PhysRevA.86.012103}
}

@article{spekkens2007evidence,
  title = {Evidence for the epistemic view of quantum states: A toy theory},
  author = {Spekkens, Robert W.},
  journal = {Phys. Rev. A},
  volume = {75},
  issue = {3},
  pages = {032110},
  numpages = {30},
  year = {2007},
  month = {Mar},
  publisher = {American Physical Society},
  doi = {10.1103/PhysRevA.75.032110},
  url = {https://link.aps.org/doi/10.1103/PhysRevA.75.032110}
}

@article{dziewior2019universality,
	doi = {10.1073/pnas.1812970116},
	url = {https://doi.org/10.1073/pnas.1812970116},
	year = 2019,
	month = {feb},
	publisher = {Proc. Natl. Acad. Sci. U.S.A. },
	volume = {116},
	number = {8},
	pages = {2881--2890},
	author = {Jan Dziewior and Lukas Knips and Demitry Farfurnik and Katharina Senkalla and Nimrod Benshalom and Jonathan Efroni and Jasmin Meinecke and Shimshon Bar-Ad and Harald Weinfurter and Lev Vaidman},
	title = {Universality of local weak interactions and its application for interferometric alignment},
	journal = {Proc. Natl. Acad. Sci. U.S.A. }
}

@article{tamir2013introduction,
	doi = {10.12743/quanta.v2i1.14},
	url = {https://doi.org/10.12743/quanta.v2i1.14},
	year = 2013,
	month = {may},
	publisher = {Quanta},
	volume = {2},
	number = {1},
	pages = {7},
	author = {Boaz Tamir and Eliahu Cohen},
	title = {Introduction to Weak Measurements and Weak Values},
	journal = {Quanta}
}

@article{wagner2024quantumcircuits, title={Quantum circuits for measuring weak values, Kirkwood–Dirac quasiprobability distributions, and state spectra}, volume={9}, ISSN={2058-9565}, url={http://dx.doi.org/10.1088/2058-9565/ad124c}, DOI={10.1088/2058-9565/ad124c}, number={1}, journal={Quantum Science and Technology}, publisher={IOP Publishing}, author={Wagner, Rafael and Schwartzman-Nowik, Zohar and Paiva, Ismael L and Te’eni, Amit and Ruiz-Molero, Antonio and Barbosa, Rui Soares and Cohen, Eliahu and Galvão, Ernesto F}, year={2024}, month=Jan, pages={015030} }

@article{fernandes2024unitary,
  title = {{Unitary-Invariant Witnesses of Quantum Imaginarity}},
  author = {Fernandes, Carlos and Wagner, Rafael and Novo, Leonardo and Galv\~ao, Ernesto F.},
  journal = {Phys. Rev. Lett.},
  volume = {133},
  issue = {19},
  pages = {190201},
  numpages = {7},
  year = {2024},
  month = {Nov},
  publisher = {American Physical Society},
  doi = {10.1103/PhysRevLett.133.190201},
  url = {https://link.aps.org/doi/10.1103/PhysRevLett.133.190201}
}

@article{wagner2024coherence,
  doi = {10.22331/q-2024-02-05-1240},
  url = {https://doi.org/10.22331/q-2024-02-05-1240},
  title = {Coherence and contextuality in a {M}ach-{Z}ehnder interferometer},
  author = {Wagner, Rafael and Camillini, Anita and Galv{\~{a}}o, Ernesto F.},
  journal = {{Quantum}},
  issn = {2521-327X},
  publisher = {{Verein zur F{\"{o}}rderung des Open Access Publizierens in den Quantenwissenschaften}},
  volume = {8},
  pages = {1240},
  month = feb,
  year = {2024}
}

@article{wagner2024inequalities,
  title = {Inequalities witnessing coherence, nonlocality, and contextuality},
  author = {Wagner, Rafael and Barbosa, Rui Soares and Galv\~ao, Ernesto F.},
  journal = {Phys. Rev. A},
  volume = {109},
  issue = {3},
  pages = {032220},
  numpages = {18},
  year = {2024},
  month = {Mar},
  publisher = {American Physical Society},
  doi = {10.1103/PhysRevA.109.032220},
  url = {https://link.aps.org/doi/10.1103/PhysRevA.109.032220}
}

@article{giordani2023experimental, title={Experimental certification of contextuality, coherence, and dimension in a programmable universal photonic processor}, volume={9}, ISSN={2375-2548}, url={http://dx.doi.org/10.1126/sciadv.adj4249}, DOI={10.1126/sciadv.adj4249}, number={44}, journal={Sci. Adv.}, publisher={American Association for the Advancement of Science (AAAS)}, author={Giordani, Taira and Wagner, Rafael and Esposito, Chiara and Camillini, Anita and Hoch, Francesco and Carvacho, Gonzalo and Pentangelo, Ciro and Ceccarelli, Francesco and Piacentini, Simone and Crespi, Andrea and Spagnolo, Nicolò and Osellame, Roberto and Galvão, Ernesto F. and Sciarrino, Fabio}, year={2023}, month=Nov, pages={eadj4249} }

@article{maosheng2026multistate,
  title = {Multistate imaginarity and coherence in qubit systems},
  author = {Li, Mao-Sheng and Wagner, Rafael and Zhang, Lin},
  journal = {Phys. Rev. A},
  volume = {113},
  issue = {1},
  pages = {012428},
  numpages = {20},
  year = {2026},
  month = {Jan},
  publisher = {American Physical Society},
  doi = {10.1103/tpgw-v6ht},
  url = {https://link.aps.org/doi/10.1103/tpgw-v6ht}
}

@phdthesis{wagner2025coherenceandcontextuality,
    title = {Coherence and contextuality as quantum resources},
    author = {Rafael Wagner},
    year = {2025},
    school = {University of Minho},
    type = {{Ph.D. Thesis}},
    eprint = {2511.16785},
    archivePrefix = {arXiv},
    primaryClass = {quant-ph},
    url = {https://arxiv.org/abs/2511.16785},
    doi = {https://doi.org/10.48550/arXiv.2511.16785}
}

@misc{wagner2026bargmannscenarios,
      title={{Bargmann Scenarios}}, 
      author={Rafael Wagner},
      year={2026},
      eprint={2604.18833},
      archivePrefix={arXiv},
      primaryClass={quant-ph},
      url={https://arxiv.org/abs/2604.18833},
      doi={
https://doi.org/10.48550/arXiv.2604.18833}
}

@article{schmid2021characterization,
  title = {{Characterization of Noncontextuality in the Framework of Generalized Probabilistic Theories}},
  author = {Schmid, David and Selby, John H. and Wolfe, Elie and Kunjwal, Ravi and Spekkens, Robert W.},
  journal = {PRX Quantum},
  volume = {2},
  issue = {1},
  pages = {010331},
  numpages = {12},
  year = {2021},
  month = {Feb},
  publisher = {American Physical Society},
  doi = {10.1103/PRXQuantum.2.010331},
  url = {https://link.aps.org/doi/10.1103/PRXQuantum.2.010331}
}

@article{selby2023accessible,
  title = {Accessible fragments of generalized probabilistic theories, cone equivalence, and applications to witnessing nonclassicality},
  author = {Selby, John H. and Schmid, David and Wolfe, Elie and Sainz, Ana Bel\'en and Kunjwal, Ravi and Spekkens, Robert W.},
  journal = {Phys. Rev. A},
  volume = {107},
  issue = {6},
  pages = {062203},
  numpages = {21},
  year = {2023},
  month = {Jun},
  publisher = {American Physical Society},
  doi = {10.1103/PhysRevA.107.062203},
  url = {https://link.aps.org/doi/10.1103/PhysRevA.107.062203}
}

@article{plavala2023general, title={{General probabilistic theories: An introduction}}, volume={1033}, ISSN={0370-1573}, url={http://dx.doi.org/10.1016/j.physrep.2023.09.001}, DOI={10.1016/j.physrep.2023.09.001}, journal={Phys. Rep.}, publisher={Elsevier BV}, author={Plávala, Martin}, year={2023}, month=Sept, pages={1–64} }

@article{hermens2018constraints, title={Constraints on macroscopic realism without assuming non-invasive measurability}, volume={63}, ISSN={1355-2198}, url={http://dx.doi.org/10.1016/j.shpsb.2017.11.003}, DOI={10.1016/j.shpsb.2017.11.003}, journal={ Stud. Hist. Philos. Sci. Part B:
Stud. Hist. Philos. of Mod. Phys.}, publisher={Elsevier BV}, author={Hermens, R. and Maroney, O.J.E.}, year={2018}, month=Aug, pages={50–64} }

@article{schmid2025shadowssubsystemsof,
  doi = {10.22331/q-2025-10-13-1880},
  url = {https://doi.org/10.22331/q-2025-10-13-1880},
  title = {Shadows and subsystems of generalized probabilistic theories: when tomographic incompleteness is not a loophole for contextuality proofs},
  author = {Schmid, David and Selby, John H. and Rossi, Vinicius P. and Baldij{\~{a}}o, Roberto D. and Sainz, Ana Bel{\'{e}}n},
  journal = {{Quantum}},
  issn = {2521-327X},
  publisher = {{Verein zur F{\"{o}}rderung des Open Access Publizierens in den Quantenwissenschaften}},
  volume = {9},
  pages = {1880},
  month = oct,
  year = {2025}
}

@article{correa2021apparent,
  title = {Apparent quantum paradoxes as simple interference: Quantum violation of the pigeonhole principle and exchange of properties between quantum particles},
  author = {Corr\^ea, Raul and Saldanha, Pablo L.},
  journal = {Phys. Rev. A},
  volume = {104},
  issue = {1},
  pages = {012212},
  numpages = {6},
  year = {2021},
  month = {Jul},
  publisher = {American Physical Society},
  doi = {10.1103/PhysRevA.104.012212},
  url = {https://link.aps.org/doi/10.1103/PhysRevA.104.012212}
}

@article{rittaud2013pigeonhole, title={{The Pigeonhole Principle, Two Centuries Before Dirichlet}}, volume={36}, ISSN={1866-7414}, url={http://dx.doi.org/10.1007/s00283-013-9389-1}, DOI={10.1007/s00283-013-9389-1}, number={2}, journal={Math. Intell.}, publisher={Springer Science and Business Media LLC}, author={Rittaud, Benoît and Heeffer, Albrecht}, year={2013}, month=Aug, pages={27–29} }

@article{waegell2023negativemass, title={Quantum reality with negative-mass particles}, volume={120}, ISSN={1091-6490}, url={http://dx.doi.org/10.1073/pnas.2018437120}, DOI={10.1073/pnas.2018437120}, number={32}, journal={Proc. Natl. Acad. Sci. U.S.A. }, publisher={National Academy of Sciences}, author={Waegell, Mordecai and Cohen, Eliahu and Elitzur, Avshalom and Tollaksen, Jeff and Aharonov, Yakir}, year={2023}, month=July,pages={e2018437120} }

@article{duprey2018quantum, title={{The Quantum Cheshire Cat effect: Theoretical basis and observational implications}}, volume={391}, ISSN={0003-4916}, url={http://dx.doi.org/10.1016/j.aop.2018.01.011}, DOI={10.1016/j.aop.2018.01.011}, journal={Ann. Phys.}, publisher={Elsevier BV}, author={Duprey, Q. and Kanjilal, S. and Sinha, U. and Home, D. and Matzkin, A.}, year={2018}, month=Apr, pages={1–15} }

@article{ravon2007three, title={The three-box paradox revisited}, volume={40}, ISSN={1751-8121}, url={http://dx.doi.org/10.1088/1751-8113/40/11/021}, DOI={10.1088/1751-8113/40/11/021}, number={11}, journal={ J. Phys. A: Math. Theor.}, publisher={IOP Publishing}, author={Ravon, Tamar and Vaidman, Lev}, year={2007}, month=Feb, pages={2873–2882} }

@article{kirkpatrick2007reply, title={{Reply to ‘The three-box paradox revisited’ by T Ravon and L Vaidman}}, volume={40}, ISSN={1751-8121}, url={http://dx.doi.org/10.1088/1751-8113/40/11/N01}, DOI={10.1088/1751-8113/40/11/n01}, number={11}, journal={ J. Phys. A: Math. Theor.}, publisher={IOP Publishing}, author={Kirkpatrick, K A}, year={2007}, month=Feb, pages={2883–2890} }

@article{chitambar2019quantum,
   title={Quantum resource theories},
   volume={91},
   ISSN={1539-0756},
   url={http://dx.doi.org/10.1103/RevModPhys.91.025001},
   DOI={10.1103/revmodphys.91.025001},
   number={2},
   journal={Rev. Mod. Phys.},
   publisher={American Physical Society (APS)},
   author={Chitambar, Eric and Gour, Gilad},
   year={2019},
   month={Apr}
}

@article{shor1999polynomial,
  title={Polynomial-time algorithms for prime factorization and discrete logarithms on a quantum computer},
  author={Shor, Peter W.},
  journal={SIAM review},
  volume={41},
  number={2},
  pages={303--332},
  year={1999},
  publisher={SIAM},
  doi={https://doi.org/10.1137/S0036144598347011}
}

@article{naseri2022entanglement,
  title = {{Entanglement and coherence in the Bernstein-Vazirani algorithm}},
  author = {Naseri, Moein and Kondra, Tulja Varun and Goswami, Suchetana and Fellous-Asiani, Marco and Streltsov, Alexander},
  journal = {Phys. Rev. A},
  volume = {106},
  issue = {6},
  pages = {062429},
  numpages = {13},
  year = {2022},
  month = {Dec},
  publisher = {American Physical Society},
  doi = {10.1103/PhysRevA.106.062429},
  url = {https://link.aps.org/doi/10.1103/PhysRevA.106.062429}
}

@article{catani2023whyinterference,
  doi = {10.22331/q-2023-09-25-1119},
  url = {https://doi.org/10.22331/q-2023-09-25-1119},
  title = {Why interference phenomena do not capture the essence of quantum theory},
  author = {Catani, Lorenzo and Leifer, Matthew and Schmid, David and Spekkens, Robert W.},
  journal = {{Quantum}},
  issn = {2521-327X},
  publisher = {{Verein zur F{\"{o}}rderung des Open Access Publizierens in den Quantenwissenschaften}},
  volume = {7},
  pages = {1119},
  month = sep,
  year = {2023}
}

@article{zhang2026reassessing,
  title = {{Reassessing the Boundary between Classical and Nonclassical for Individual Quantum Processes}},
  author = {Zhang, Yujie and Schmid, David and Y\ifmmode \bar{\imath}\else \={\i}\fi{}ng, Y\`{\i}l\`e and Spekkens, Robert W.},
  journal = {Phys. Rev. X},
  volume = {16},
  issue = {2},
  pages = {021050},
  numpages = {54},
  year = {2026},
  month = {Jun},
  publisher = {American Physical Society},
  doi = {10.1103/vqfz-wzjg},
  url = {https://link.aps.org/doi/10.1103/vqfz-wzjg}
}

@article{hall2004prior,
  title = {Prior information: How to circumvent the standard joint-measurement uncertainty relation},
  author = {Hall, Michael J. W.},
  journal = {Phys. Rev. A},
  volume = {69},
  issue = {5},
  pages = {052113},
  numpages = {12},
  year = {2004},
  month = {May},
  publisher = {American Physical Society},
  doi = {10.1103/PhysRevA.69.052113},
  url = {https://link.aps.org/doi/10.1103/PhysRevA.69.052113}
}

@article{hardy1999disentangling,
  title={Disentangling nonlocality and teleportation},
  author={Hardy, Lucien},
  journal={arXiv:quant-ph/9906123},
  year={1999},
  url={https://doi.org/10.48550/arXiv.quant-ph/9906123}
}

@article{vaidman1996weak, title={Weak-measurement elements of reality}, volume={26}, ISSN={1572-9516}, url={http://dx.doi.org/10.1007/BF02148832}, DOI={10.1007/bf02148832}, number={7}, journal={Foundations of Physics}, publisher={Springer Science and Business Media LLC}, author={Vaidman, Lev}, year={1996}, month=July, pages={895–906} }

@article{hance2024counterfactuality, title={Counterfactuality, back-action, and information gain in multi-path interferometers}, volume={9}, ISSN={2058-9565}, url={http://dx.doi.org/10.1088/2058-9565/ad63c7}, DOI={10.1088/2058-9565/ad63c7}, number={4}, journal={Quantum Science and Technology}, publisher={IOP Publishing}, author={Hance, Jonte R and Matsushita, Tomonori and Hofmann, Holger F}, year={2024}, month=July, pages={045015} }

@article{hofmann2020contextuality,
  title = {Contextuality of quantum fluctuations characterized by conditional weak values of entangled states},
  author = {Hofmann, Holger F.},
  journal = {Phys. Rev. A},
  volume = {102},
  issue = {6},
  pages = {062215},
  numpages = {7},
  year = {2020},
  month = {Dec},
  publisher = {American Physical Society},
  doi = {10.1103/PhysRevA.102.062215},
  url = {https://link.aps.org/doi/10.1103/PhysRevA.102.062215}
}

@article{ming2023characterization,
  title = {Characterization of the nonclassical relation between measurement outcomes represented by nonorthogonal quantum states},
  author = {Ji, Ming and Hofmann, Holger F.},
  journal = {Phys. Rev. A},
  volume = {107},
  issue = {2},
  pages = {022208},
  numpages = {8},
  year = {2023},
  month = {Feb},
  publisher = {American Physical Society},
  doi = {10.1103/PhysRevA.107.022208},
  url = {https://link.aps.org/doi/10.1103/PhysRevA.107.022208}
}

@article{catani2023aspects,
  title = {Aspects of the phenomenology of interference that are genuinely nonclassical},
  author = {Catani, Lorenzo and Leifer, Matthew and Scala, Giovanni and Schmid, David and Spekkens, Robert W.},
  journal = {Phys. Rev. A},
  volume = {108},
  issue = {2},
  pages = {022207},
  numpages = {11},
  year = {2023},
  month = {Aug},
  publisher = {American Physical Society},
  doi = {10.1103/PhysRevA.108.022207},
  url = {https://link.aps.org/doi/10.1103/PhysRevA.108.022207}
}

@article{ahnefeld2026coherence,
  title = {{Coherence as a Resource for Phase Estimation}},
  author = {Ahnefeld, Felix and Theurer, Thomas and Plenio, Martin B.},
  journal = {Phys. Rev. Lett.},
  volume = {136},
  issue = {18},
  pages = {180201},
  numpages = {8},
  year = {2026},
  month = {May},
  publisher = {American Physical Society},
  doi = {10.1103/s6d5-lc6g},
  url = {https://link.aps.org/doi/10.1103/s6d5-lc6g}
}

@article{ahnefeld2022role,
	doi = {10.1103/physrevlett.129.120501},
	url = {https://doi.org/10.1103/physrevlett.129.120501},
	year = 2022,
	month = {September},
	publisher = {American Physical Society ({APS})},
	volume = {129},
	number = {12},
	author = {Felix Ahnefeld and Thomas Theurer and Dario Egloff and Juan Mauricio Matera and Martin B. Plenio},
	title = {Coherence as a Resource for {S}hor's Algorithm},
	journal = {Phys. Rev. Lett.},
    pages={120501}
}

@article{wu2021experimental, title={{Experimental Progress on Quantum Coherence: Detection, Quantification, and Manipulation}}, volume={4}, ISSN={2511-9044}, url={http://dx.doi.org/10.1002/qute.202100040}, DOI={10.1002/qute.202100040}, number={9}, journal={Adv. Quantum Technol.}, publisher={Wiley}, author={Wu, Kang‐Da and Streltsov, Alexander and Regula, Bartosz and Xiang, Guo‐Yong and Li, Chuan‐Feng and Guo, Guang‐Can}, year={2021}, month=July,pages={2100040} }

@misc{aberg2006quantifyingsuperposition,
      title={{Quantifying Superposition}}, 
      author={Johan Aberg},
      year={2006},
      eprint={quant-ph/0612146},
      archivePrefix={arXiv},
      primaryClass={quant-ph},
      url={https://arxiv.org/abs/quant-ph/0612146}, 
}

@article{kirkpatrick2003classical, title={Classical three-box  paradox }, volume={36}, ISSN={1361-6447}, url={http://dx.doi.org/10.1088/0305-4470/36/17/315}, DOI={10.1088/0305-4470/36/17/315}, number={17}, journal={J. Phys. A: Math. Gen. }, publisher={IOP Publishing}, author={Kirkpatrick, K A}, year={2003}, month=Apr, pages={4891–4900} }

@article{harrigan2010Einstein, title={{Einstein, Incompleteness, and the Epistemic View of Quantum States}}, volume={40}, ISSN={1572-9516}, url={http://dx.doi.org/10.1007/s10701-009-9347-0}, DOI={10.1007/s10701-009-9347-0}, number={2}, journal={Found. Phys.}, publisher={Springer Science and Business Media LLC}, author={Harrigan, Nicholas and Spekkens, Robert W.}, year={2010}, month=Jan, pages={125–157} }

@article{pratapsi2025elementarycharacterizationbargmanninvariants,
  title = {Elementary characterization of Bargmann invariants},
  author = {Pratapsi, Sagar Silva and Gouveia, Jo\~ao and Novo, Leonardo and Galv\~ao, Ernesto F.},
  journal = {Phys. Rev. A},
  volume = {112},
  issue = {4},
  pages = {042421},
  numpages = {7},
  year = {2025},
  month = {Oct},
  publisher = {American Physical Society},
  doi = {10.1103/hsnv-wpt3},
  url = {https://link.aps.org/doi/10.1103/hsnv-wpt3}
}

@article{zhang2025geometrysets,
  title = {{Geometry of sets of Bargmann invariants}},
  author = {Zhang, Lin and Xie, Bing and Li, Bo},
  journal = {Phys. Rev. A},
  volume = {111},
  issue = {4},
  pages = {042417},
  numpages = {12},
  year = {2025},
  month = {Apr},
  publisher = {American Physical Society},
  doi = {10.1103/PhysRevA.111.042417},
  url = {https://link.aps.org/doi/10.1103/PhysRevA.111.042417}
}

@article{zhang2024local,
  title = {Bargmann-invariant framework for local unitary equivalence and entanglement},
  author = {Zhang, Lin and Xie, Bing and Tao, Yuanhong},
  journal = {Phys. Rev. A},
  volume = {112},
  issue = {5},
  pages = {052426},
  numpages = {28},
  year = {2025},
  month = {Nov},
  publisher = {American Physical Society},
  doi = {10.1103/s3mp-3kn6},
  url = {https://link.aps.org/doi/10.1103/s3mp-3kn6}
}

@article{zamora2025semi,
  title = {Semi-device-independent nonstabilizerness certification in the prepare-and-measure scenario},
  author = {Zamora, Santiago and Mac\^edo, Rafael A. and Sarubi, Tailan S. and Alves, Mois\'es and Poderini, Davide and Chaves, Rafael},
  journal = {Phys. Rev. A},
  volume = {112},
  issue = {4},
  pages = {042410},
  numpages = {14},
  year = {2025},
  month = {Oct},
  publisher = {American Physical Society},
  doi = {10.1103/wm7m-xnfq},
  url = {https://link.aps.org/doi/10.1103/wm7m-xnfq}
}

@article{jones2023distinguishability,
  title = {Distinguishability and mixedness in quantum interference},
  author = {Jones, Alex E. and Kumar, Shreya and D'Aurelio, Simone and Bayerbach, Matthias and Menssen, Adrian J. and Barz, Stefanie},
  journal = {Phys. Rev. A},
  volume = {108},
  issue = {5},
  pages = {053701},
  numpages = {11},
  year = {2023},
  month = {Nov},
  publisher = {American Physical Society},
  doi = {10.1103/PhysRevA.108.053701},
  url = {https://link.aps.org/doi/10.1103/PhysRevA.108.053701}
}

@article{grassl1998computing,
  title = {Computing local invariants of quantum-bit systems},
  author = {Grassl, Markus and R\"otteler, Martin and Beth, Thomas},
  journal = {Phys. Rev. A},
  volume = {58},
  issue = {3},
  pages = {1833--1839},
  numpages = {0},
  year = {1998},
  month = {Sep},
  publisher = {American Physical Society},
  doi = {10.1103/PhysRevA.58.1833},
  url = {https://link.aps.org/doi/10.1103/PhysRevA.58.1833}
}

@article{wagner2024certifying,
  author    = {Wagner, Rafael
               and Peres, Filipa C.\ R.
               and Zambrini Cruzeiro, Emmanuel
               and Galv\~{a}o, Ernesto F.},
  title     = {{Unitary‐invariant method for witnessing nonstabilizerness in quantum processors}},
  journal   = { J. Phys. A: Math. Theor.},
  volume    = {58},
  number    = {28},
  pages     = {285302},
  year      = {2025},
  month     = jul,
  doi       = {10.1088/1751-8121/ade9ff},
  issn      = {1751-8121},
  publisher = {IOP Publishing},
  url       = {https://doi.org/10.1088/1751-8121/ade9ff},
}

@article{xu2026bargmanninvariantsgaussianstates, title={{Bargmann invariants of Gaussian states}}, volume={67}, ISSN={1089-7658}, url={http://dx.doi.org/10.1063/5.0306797}, DOI={10.1063/5.0306797}, number={5}, journal={J. Math. Phys.}, publisher={AIP Publishing}, author={Xu, Jianwei}, year={2026}, month=May,pages={052101} }

@misc{zhang2026surveybargmanninvariantsgeometric,
      title={{A Survey of Bargmann Invariants: Geometric Foundations and Applications}}, 
      author={Lin Zhang and Bing Xie},
      year={2026},
      eprint={2601.01858},
      archivePrefix={arXiv},
      primaryClass={quant-ph},
      url={https://arxiv.org/abs/2601.01858}, 
}

\end{document}